\newtheorem{theorem}{Theorem}[section]
\newtheorem{lemma}[theorem]{Lemma}
\newcommand{\bra}[1]{\mbox{$\langle #1 |$}}
\newcommand{\ket}[1]{\mbox{$| #1 \rangle$}}
\newcommand{\Tr}{{\rm Tr}\,}
\newcommand{\eg}{{\it e.g.}}
\newcommand{\ie}{{\it i.e.}}
\title{Squashed Entanglement from Generalized Rindler Wedge}
\author[a]{Xin-Xiang Ju,}
\author[a]{Bo-Hao Liu,}
\author[b]{Wen-Bin Pan,}
\author[a,c]{Ya-Wen Sun}
\author[a]{and Yuan-Tai Wang}
\emailAdd{juxinxiang21@mails.ucas.ac.cn}
\emailAdd{liubohao16@mails.ucas.ac.cn}
\emailAdd{panwenbin18@mails.ucas.ac.cn}
\emailAdd{yawen.sun@ucas.ac.cn}
\emailAdd{wangyuantai19@mails.ucas.ac.cn}
\affiliation[a]{School of Physical Sciences, University of Chinese Academy of Sciences, Zhongguancun east road 80, Beijing 100190, China}
\affiliation[b]{Institute of High Energy Physics, Chinese Academy of Sciences,\\19B Yuquan Road, Shijingshan District, Beijing 100049, China}
\affiliation[c]{Kavli Institute for Theoretical Sciences, University of Chinese Academy of Sciences, Beijing 100049, China}
\abstract{We investigate the bipartite and multipartite quantum entanglement structure of gravitational subsystems and in the dual holographic field theory based on the generalized Rindler wedge formalism. We deduce a separation theorem, which asserts that for subregions satisfying a certain geometric condition, the bipartite/multipartite squashed entanglement or the conditional entanglement of multipartite information vanishes, indicating that these subregions represent separable states with no quantum entanglement among them. We interpret this fact from the observer perspective in gravity and show how to probe the entanglement structure further in this framework by introducing a time cutoff in the gravitational spacetime. We also present the corresponding dual boundary field theory interpretation.}
\begin{document}
\maketitle

\section{Introduction}
\noindent
The profound significance of quantum entanglement in gravitational and holographic theories has been acknowledged for a long time. This recognition traces back to the exploration of the Bekenstein-Hawking entropy associated with black holes \cite{Bekenstein:1973ur,Bardeen:1973gs,Bekenstein:1974ax,Hawking:1975vcx,Eisert:2008ur,Jacobson:2003wv}, an endeavor that subsequently lead to the formulation of the holographic principle \cite{Susskind_1995} and laid important foundation for the establishment of the AdS/CFT correspondence \cite{Maldacena_1999}. 
With the proposal of the Ryu-Takayanagi (RT) formula \cite{Ryu_2006}, which provides a geometric dual for entanglement entropy in holographic field theories, the profound idea that ``entanglement builds geometry" \cite{Rangamani_2017,VanRaamsdonk:2010pw} took shape. This includes investigations into tensor networks \cite{swingle2012constructing,Vidal_2007,vidal2010entanglement}, the ER=EPR proposal \cite{Maldacena_2013}, and other related studies, which have contributed to unveiling the entanglement structures in gravitational and holographic theories.

The RT formalism states that bulk degrees of freedom within the entanglement wedge, which is the region bounded by extremal surfaces, correspond to degrees of freedom of the corresponding boundary subregions. Thus entanglement wedges define consistent gravitational subsystems. {The subregion-subregion duality and various quantum information properties based on the entanglement wedge have been studied extensively. In this work, we study the entanglement properties, especially bipartite and multipartite entanglement, of more general shapes of gravitational subregions: the Rindler convex subregions. Given the non-local nature of gravity, gravitational subsystems cannot be arbitrarily defined. Rindler convex subregions are well-defined gravitational subregions based on the following three criteria.}

 {I. The subalgebra-subregion duality. In \cite{Leutheusser:2022bgi}, the subregion subregion duality has been generalized to the subalgebra-subregion duality, where more general shapes of bulk subregions could be associated with a special type of boundary time band subalgebra and this defines well-defined gravitational subregions. For this boundary time band subalgebra, the bulk subregion has to be the complement of a geodesically convex subregion on the time-reflection symmetric Cauchy slice. The convex bulk subregion therefore corresponds to the commutant of a time-band algebra. As the existence of a subalgebra implies well-defined gravitational subsystems, the convex subregions are indeed well-defined subregions. Note that the geodesic convexity here required in the subalgebra subregiond duality is equivalent to the Rindler convexity condition, whose definition we will explain later, in the special cases that were studied in \cite{Leutheusser:2022bgi}.}
 
 {II. The generalized Rindler wedge formalism.  In \cite{Ju:2023bjl}, another way to define gravitational subsystems is proposed, where instead of using extremal surfaces, we introduced a set of accelerating observers and associated well-defined subregions based on whether observers can precisely access those regions. A Rindler convexity condition is required on the shape of the enclosing surface, and the subsystems that could be accessible by well-defined accelerating observers are named generalized Rindler wedges. 
This proposal for defining gravitational subsystems aligns with the subregion subalgebra duality in \cite{Leutheusser:2022bgi}. These GRW gravitational subsystems are associated with a special type of Type III von Neumann subalgebra \cite{Leutheusser:2022bgi}, and this GRW formalism provides a physical explanation for the convexity constraint in \cite{Leutheusser:2022bgi} when defining subalgebra in the subregion. As the existence of a subalgebra serves as a guideline in defining gravitational subsystems, this fact validates the consistency of these generalized gravitational subregions as well-defined gravitational subsystems.}

 {III. The Bousso Pennington generalized entanglement wedge \cite{Bousso:2022hlz, {Balasubramanian:2023dpj}}.
{From \cite{Bousso:2022hlz} and \cite{Balasubramanian:2023dpj}, another consistent way of defining gravitational subsystems is introduced by defining a gravitational entanglement wedge for a given gravitational subregion. The results indicate that the entanglement entropy of a given gravitational subregion is given by the minimal area of the surface that covers it, bounding the spatial part of the generalized entanglement wedge.  {In section 2, we will prove that in a large class of spacetimes,} Rindler convex subregions are generalized entanglement wedges in the Bousso Pennington formalism \cite{Bousso:2022hlz}. For convex subregions as we are investigating, this area is exactly the surface area of this convex region. Consequently, our results in \cite{Ju:2023bjl} could also be derived in an alternative way as in \cite{Balasubramanian:2023dpj} and \cite{Bousso:2022hlz}.}}

Being a consistent subsystem, the entanglement entropy\footnote{ {The Type III von Neumann subalgebra could be promoted to a Type II subalgebra so that the entanglement entropy could be defined as done in \cite{Jensen:2023yxy} with a local modular Hamiltonian \cite{Ju:2025mvz}.}} for generalized Rindler wedges could be calculated from the gravitational path integral, which is proportional to the area of the Rindler convex surface. Also, its holographic dual on the boundary field theory side as the entanglement entropy of a specific state $\tilde \rho$, where corresponding long-range entanglement is eliminated, is also established \cite{Czech_2015, Ju:2023bjl}. The complementary subregion of the generalized Rindler wedge on the Cauchy slice is named the Rindler convex region, whose entanglement entropy is also proportional to the area of the Rindler convex surface as the whole state is a pure state.

{Compared to the entanglement wedge, which relies on extremal surfaces to define consistent subregions, the generalized Rindler wedge formalism imposes a weaker requirement, only necessitating Rindler-convexity of the enclosing surface. This condition provides more flexibility in selecting the shape of the surfaces. {Also the quantum states dual to these Rindler convex subregions have been shown to be well-defined states with certain IR entanglement structures removed consistently \cite{Ju:2023bjl}. Thus this} improves our ability to analyze intricate gravitational entanglement structures that the RT formula cannot address.}

However, entanglement entropy only measures the amount of entanglement of a subsystem with its complement without delivering more precise information about which components are engaged in and contribute to the entanglement. One could investigate entanglement between subregions to reveal fine entanglement structures, rather than merely focusing on the entanglement entropy. A challenge thus arises when analyzing the entanglement between subregions in gravitational and holographic theories, i.e. to seek for a faithful, geometrically computable entanglement measure for mixed states. Some progress has been made, such as the entanglement of purification \cite{Terhal_2002,Takayanagi:2017knl,Umemoto_2018,Umemoto_2019}, and partial entanglement entropy \cite{Wen_2020,Lin:2023orb,Lin:2021hqs}. However, {neither of these measures excludes classical correlations} apart from the quantum entanglement. Moreover, the challenges become even more formidable when we attempt to study multipartite entanglement structures \cite{Ju:2023tvo}, as these structures are significantly richer than their bipartite counterparts \cite{bengtsson2016brief}.

In this paper, we utilize the squashed entanglement \cite{Christandl_2004,Yang_2009,Avis_2008} as the measure of mixed state entanglement between two Rindler-convex subregions in gravitational spacetime. Squashed entanglement is an ideal bipartite quantum entanglement measure \cite{Brand_o_2011}, even though its computation, even in quantum information theory, remains an almost insurmountable challenge. We find that the squashed entanglement between two Rindler convex subregions could vanish when a geometrical condition, which involves the separation of these two subregions by a Rindler-convex region, is satisfied. This condition results in a separable state with no quantum entanglement between the two subregions. The result, namely the separation theorem, is obtained utilizing the generalized Rindler wedge formulated in \cite{Ju:2023bjl,Hubeny_2014,Balasubramanian:2013rqa}.

Furthermore, the generalization of the separation theorem from bipartite to multipartite subregions will also be realized {in} gravitational systems based on the generalized Rindle wedge formalism. As previously mentioned, multipartite entanglement structures offer a {richer and more complicated} landscape compared to their bipartite counterparts. In this work, we study various forms of multipartite quantum state separabilities, including the full $m$-partite separability, the $k$-separability \cite{Horodecki_2009,Hong_2021,Ananth_2015}, as well as the $m$-partite total and partial separabilities \cite{brassard2001multi} {for gravitational subsystems}. Adopting the multipartite squashed entanglement \cite{Yang_2009,Avis_2008} {and the conditional entanglement of multipartite information \cite{Yang_2008}} as our measure, we get various geometrical conditions corresponding to {the vanishing conditions for both measures in each form of the quantum state separability mentioned above}, thus generalizing the separation theorem to multipartite cases.  {Note that all these findings only rely on the fact that Rindler convex subregions are well-defined gravitational subsystems with their entanglement entropy proportional to the surface area and this fact is also supported by the subalgebra subregion duality and the generalized gravitational wedge, besides the generalized Rindler wedge formalism. }

We find an observer interpretation of the multipartite separation theorem in the gravitational system and then introduce a time cutoff {in the gravitational spacetime}, which we propose as a means to probe more explicit structures of entanglement. 
 We have discovered that these nontrivial aspects of entanglement structures in gravitational theory naturally emerge in boundary physics through the utilization of the generalized Rindler wedge subregion duality proposed in \cite{Ju:2023bjl}.

In short, in this work, based on the fact that the entanglement entropy of Rindler convex regions is proportional to the area of the enclosing Rindler convex surfaces, utilizing the squashed entanglement measure, we obtain the bipartite and multipartite entanglement structures of gravitational subsystems, especially the configurations with no quantum entanglement between the subsystems. The rest of the paper is organized as follows. Section 2 provides an overview of the generalized Rindler wedge \cite{Ju:2023bjl}, which serves as one of the basis for the present work. Section 3 is dedicated to establishing the bipartite separation theorem for gravitational subsystems. In Section 4, we show the physical implications of the separation theorem and explore its holographic correspondence. In Section 5, we extend our analysis from bipartite to multipartite entanglement and provide a simple example of three subregions with a time cutoff introduced that probes the more explicit entanglement structure.

\section{Review of {Rindler convex regions as well-defined gravitational subsytems} }

\noindent  {Rindler convex regions are well-defined gravitational subsytems as seen from the subalgebra subregion duality \cite{Leutheusser:2022bgi}, the generalized Rindler wedge formalism \cite{Ju:2023bjl} and the generalized entanglement wedge in \cite{Bousso:2022hlz}.} In this section, we present a concise review of \cite{Ju:2023bjl}, where we studied the gravitational entropy of general observer horizons, as the basis for later discussions. In section 2.1, we demonstrate the Rindler-convexity condition for defining gravitational subsystems as well as the corresponding entanglement entropy. In section 2.2, we briefly discuss the holographic interpretation of the generalized Rindler wedge, which serves as the dual to a boundary time band. Furthermore, we emphasize the significance of the time cutoff of the boundary time band and its introduction into the bulk in section 2.3.

{Note that recently, works by Balasubramanian et al. \cite{Balasubramanian:2023dpj} and Bousso et al. \cite{Bousso:2022hlz} have defined gravitational entanglement wedges for arbitrary gravitational subregions. These studies suggest that the entanglement entropy of the entanglement wedge for a given gravitational subregion is determined by the minimal area of the surface enclosing it. In the case of convex subregions, which we are examining, this minimal area corresponds precisely to the surface area of the convex region. Consequently, our findings in \cite{Ju:2023bjl} can be derived using an approach similar to those outlined in \cite{Balasubramanian:2023dpj} and \cite{Bousso:2022hlz}.} Together with some other supporting facts, including the subalgebra subregion duality, this further confirms the consistency of this new proposal of generalized Rindler wedges \cite{Ju:2023bjl}. 

    \subsection{Rindler convexity condition of the observer horizon}
\noindent In a gravity system, due to the lack of locality, we cannot define gravitational subsystems arbitrarily \cite{Giddings:2018umg, Giddings:2019hjc}. In \cite{Ju:2023bjl}, we proposed a new definition of gravitational subsystems from the spacetime subregion that a set of well-defined accelerating observers could observe. A shape constraint, namely the Rindler convexity condition, is required on the enclosing spacelike surface when defining these subsystems due to the nonintersecting condition of accelerating observer worldlines. We have shown that this Rindler-convexity condition that we proposed is consistent with the constraint in the subregion-subalgebra correspondence on the shape of the subregion in \cite{Leutheusser:2022bgi}.  Our method to define gravitational subsystems provides a physical origin for the consistent definition of the corresponding subalgebra.

In this framework, the properties of the most general horizon for observers were explored, where the term ``observers" denotes a set of detectors distributed across a portion of the Cauchy slice, moving along their individual worldlines without back-reaction to the spacetime, thereby forming a reference frame. These detectors are capable of acceleration and may not be rigid, meaning that the distances between them can vary. However, it is crucial to note that the worldlines of these detectors cannot intersect with each other in order to maintain a well-defined reference frame.


\begin{figure}[H]
    \centering    \includegraphics[width=0.6\textwidth]{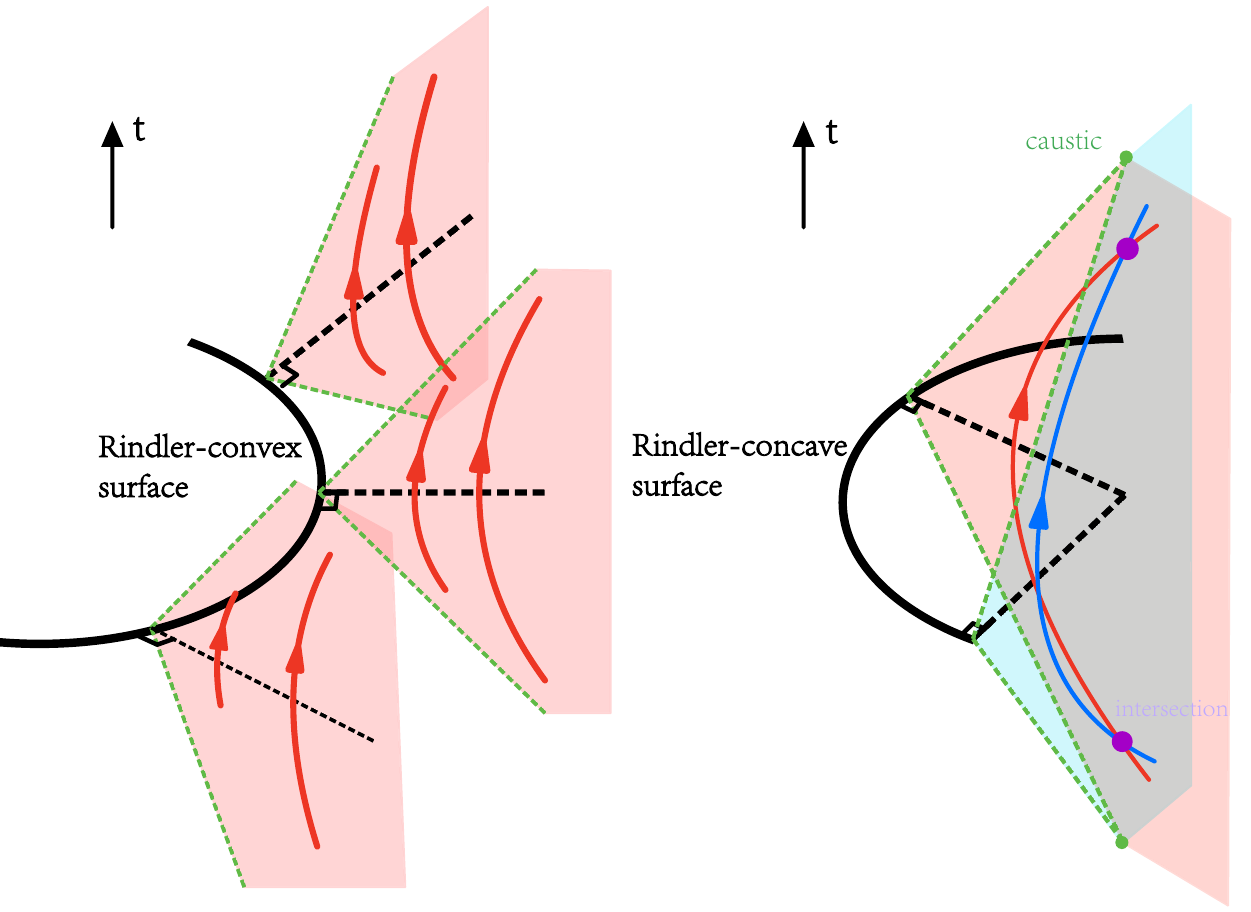} 
    \caption{Accelerating observers associated with Rindler convex and concave surfaces. The worldlines of observers are marked by arrowed curves (hyperbolas). The dashed green lines are {\it normal null geodesics} of the black space-like curves. It can be seen that the normal null geodesics and worldlines of the observers will not intersect when the surface is Rindler-convex (left), while intersect to form caustics (green dots) and the intersection point (purple dots) when the surface is concave (right). Figure copied from \cite{Ju:2023bjl}.} \label{convexconcave}
\end{figure}

Due to the definitions mentioned above, we find that a surface on the Cauchy slice can serve as the bifurcation surface of an observer horizon if and only if it satisfies a global condition named the Rindler-convexity condition, to avoid the intersection of observer worldlines. To determine whether a given spacelike surface is Rindler-convex or not, we have the following two equivalent criteria: the ``normal condition" and the ``tangential condition" \cite{Ju:2023bjl}.

    \textit{\bf The two criteria to determine Rindler-convexity\footnote{ {Note that similar to convexity in flat Euclidean space, Rindler-convex condition describes the orientable co-dimension-2 surface bounding this spatial region, while the region determines the ``inward" side of this surface.}}:} the normal condition states that a surface on a Cauchy slice is `Rindler-convex' if the normal null geodesics outside its boundary both to the future and from the past never intersect to form caustics \footnote{Note that generalized Rindler wedges are different from entanglement wedges. An entanglement wedge is only a generalized Rindler wedge when this entanglement wedge coincides with the causal wedge, with spherical regions in the vacuum state of AdS/CFT serving as an illustrative example. Therefore, entanglement wedges that are not generalized Rindler wedges could encounter caustics, which confirms that they are not generalized Rindler wedges from this normal definition. The main differences  between our work and some previous related concepts \cite{Bousso:2002ju,Gentle:2015cfp,Wen:2018mev,Chen:2022eyi} could be found in \cite{Ju:2023bjl}.}; the {tangential condition for states that}
    the boundary of a region on a Cauchy slice is `Rindler-convex' if any lightsphere externally tangential to its boundary never reaches the inside of the region, where a `lightsphere' is defined to be the intersection of the Cauchy slice with an arbitrary lightcone.

    These two equivalent conditions are derived from the fact that the worldlines of near-horizon observers cannot intersect and that the observer cannot have any causal connection with the region on the other side of the horizon, respectively. {Here we need to emphasize that though these two criteria are not expressed in formulas, they are exactly and explicitly formulated using the mathematical language above. The two conditions are strict in mathematical definitions.} 

    {The normal condition comes from the original definition of Rindler convexity while the tangential definition is more convenient in use. Here we provide a simple proof of the equivalence of these two definitions of Rindler convexity, which is Theorem 2.2 in the following. Before proving this theorem, we first introduce a lemma as follows. }
    {\begin{lemma}
        Given a bifurcation surface $M$ on the Cauchy slice $\Sigma$. One of its normal null geodesics $N_l$ intersects with it at point $P$. The lightsphere on $\Sigma$, generated by a light cone whose vertex is $V$ on $N_l$, must be tangential to $M$ at point $P$, as shown in Figure \ref{Lemma}.
    \end{lemma}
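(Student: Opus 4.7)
The plan is to show equality of the tangent spaces $T_P M$ and $T_P(\text{lightsphere})$ inside $T_P \Sigma$, which is the usual first-order notion of tangency for two codimension-one submanifolds of $\Sigma$ meeting at $P$. First I would check that $P$ actually lies on the lightsphere: by hypothesis $N_l$ is a null geodesic passing through both $V$ and $P$, so $P$ lies on the light cone with vertex $V$; since $P \in \Sigma$ by construction, $P$ lies on the intersection of this cone with $\Sigma$, i.e.\ on the lightsphere. So the two submanifolds pass through a common point, and a dimension count (both are codimension one in $\Sigma$) means that it suffices to prove one tangent-space inclusion.

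Next I would characterize each tangent space in terms of the null vector $k^\mu := \dot{N}_l{}^\mu$ at $P$. Because $N_l$ is a null generator of the cone from $V$, a standard fact about null hypersurfaces states that the tangent space to the cone at $P$ coincides with the metric-orthogonal complement $k^{\perp}$ of $k$ in $T_P$ (spacetime); intersecting with $T_P \Sigma$ gives
\begin{equation}
T_P(\text{lightsphere}) \;=\; k^{\perp} \cap T_P \Sigma .
\end{equation}
On the other hand, by definition $N_l$ is a \emph{normal} null geodesic to $M$, so $k$ lies in the spacetime normal bundle of $M$ at $P$. Therefore every vector $v \in T_P M$ satisfies $g(v,k)=0$; since also $v \in T_P M \subset T_P \Sigma$, we obtain $v \in k^{\perp} \cap T_P \Sigma = T_P(\text{lightsphere})$. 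This gives $T_P M \subset T_P(\text{lightsphere})$, and by the dimension count above the inclusion is an equality, establishing the claimed tangency.

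Finally I would note where care is needed. The main subtlety is the characterization of the tangent space of the null cone at $P$: the induced metric is degenerate along $k$, so one cannot directly invoke the non-degenerate orthogonal-complement argument used for spacelike submanifolds. The correct statement is that the cone is a null hypersurface whose tangent space at any regular point on a generator equals the $g$-orthogonal complement of the generator's tangent, which must be invoked rather than re-derived. Once that is in place, the rest of the argument is linear-algebraic. I would also briefly remark that the conclusion is local and first-order, so no global assumption on $M$ (such as Rindler-convexity itself) enters the proof of this lemma; the lemma is then used in the equivalence theorem to translate the normal condition into the tangential condition.
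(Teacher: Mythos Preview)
Your argument is correct and takes a different route from the paper. The paper proves the lemma by a causality argument: it invokes the inclusion $\partial\mathcal{J}^{+}(M)\subset C\cup\underline{C}$ (the boundary of the causal future of $M$ lies in the normal null congruences of $M$), and then argues by contradiction that if the lightsphere were not tangential to $M$ at $P$, some nearby point $P'\in M$ would lie strictly inside the lightsphere and hence be causally connected to $V$, contradicting $V\in\partial\mathcal{J}^{+}(M)$. Your proof is instead purely local and linear-algebraic: you identify both $T_P M$ and $T_P(\text{lightsphere})$ with the same $(d{-}1)$-dimensional subspace $k^{\perp}\cap T_P\Sigma$, using only that $k$ is normal to $M$ and is the generator of the null cone through $P$. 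Your approach is more elementary and self-contained --- it does not rely on the cited structural result about $\partial\mathcal{J}^{+}(M)$ --- and makes the first-order meaning of ``tangential'' explicit. The paper's approach, on the other hand, is closer in spirit to the surrounding discussion, where the equivalence of the normal and tangential conditions is ultimately a statement about causal structure. One small point worth adding to your write-up: the step $T_P(\text{lightsphere})=k^{\perp}\cap T_P\Sigma$ tacitly uses that the cone and $\Sigma$ intersect transversally at $P$; this holds because $k$ is null while $T_P\Sigma$ is spacelike, so $k\notin T_P\Sigma$ and hence $k^{\perp}+T_P\Sigma$ is the full tangent space.
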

    \begin{figure}[H]
    \centering    \includegraphics[width=0.7\textwidth]{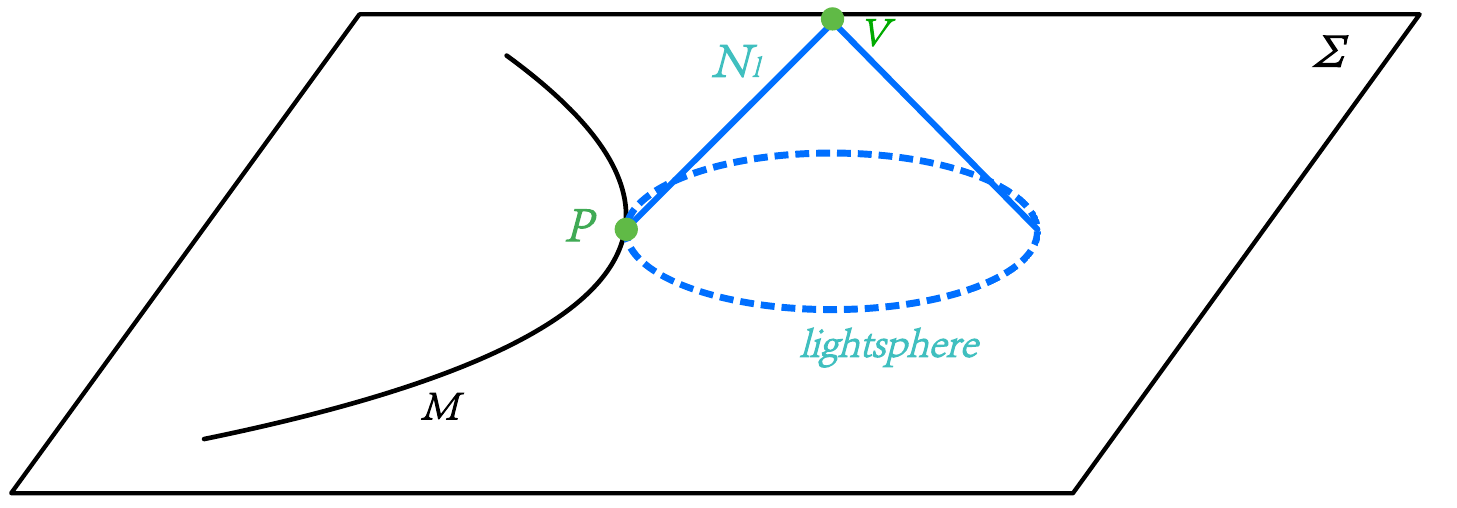} 
    \caption{On the Cauchy slice $\Sigma$, the lightsphere will be tangential to the bifurcation surface at point $P$.} \label{Lemma}
\end{figure}
\begin{proof}
Let $C$ and $\underline{C}$ denote the (future) outgoing and incoming null geodesic congruence normal to $M$, respectively \cite{Aretakis}, (proposition 2.4.2). Then,
\begin{equation}\label{nullgeodesiccongruence}
    \partial \mathcal{J}^{+}(M) \subset C \bigcup \underline{C},
\end{equation}
 {where the left hand side denote the boundary of the causal future of $M$.}
The point inside the lightsphere is causally connected with point $V$. If this lightsphere is not tangential to $M$, i.e., the lightsphere crosses $M$ at $P$, then there exists a point $P' \in M$, a neighbor of $P$ inside this lightsphere and causally connected with point $V$, which violates equation (\ref{nullgeodesiccongruence}). Q.E.D. 
\end{proof}
}
{\begin{theorem}
    \textit{{ The normal condition and the tangential condition of Rindler convexity \footnote{Note that Rindler-convexity is a covariant concept that does not rely on reference frames.  Moreover, as Weyl transformation leaves null geodesics invariant, Rindler-convexity is also Weyl invariant.} are equivalent.}}
\end{theorem}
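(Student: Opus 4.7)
The plan is to prove the equivalence by establishing each implication via its contrapositive, with Lemma 2.1 serving as the bridge between normal null geodesics and externally tangential lightspheres. As a preliminary step I would establish a partial converse to Lemma 2.1: if a lightsphere on $\Sigma$ is externally tangential to $M$ at a point $P$ and is generated by a lightcone with vertex $V$, then $V$ must lie on one of the null geodesics normal to $M$ at $P$. The reason is that the null generator of the lightcone meeting $\Sigma$ at $P$ must have its spatial projection aligned with the normal of $M$ within $\Sigma$ at $P$; otherwise points of $M$ arbitrarily close to $P$ would lie strictly inside the lightcone, contradicting external tangency. This converse, together with Lemma 2.1, sets up a one-to-one geometric dictionary between vertices on normal null geodesics and tangent lightspheres.

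For the direction ``normal $\Rightarrow$ tangential,'' I would assume the normal condition and suppose for contradiction that some lightsphere externally tangential to $M$ at $P$, with vertex $V$, reaches into the interior of the region enclosed by $M$. By the converse just described, $V$ lies on the normal null geodesic $N_l$ through $P$. That the lightsphere enters the interior means there exists a point $P' \in \partial \mathcal{J}^-(V) \cap \Sigma$ strictly inside the region, and hence a null geodesic from $V$ to $P'$. Deforming $N_l$ infinitesimally within the normal null congruence $C$ (or $\underline{C}$), by continuity the deformed normal null geodesics must already have crossed $N_l$ at or before $V$ in order for a member of the past lightcone at $V$ to reach the opposite side of $M$. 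This is precisely a caustic of the normal null congruence, contradicting the normal condition.

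For the direction ``tangential $\Rightarrow$ normal,'' I would again argue by contrapositive. Suppose the normal condition fails, so two infinitesimally separated normal null geodesics meet at a caustic point $V_c$. Consider the lightsphere on $\Sigma$ generated by the lightcone at $V_c$: by Lemma 2.1 applied along each of the two colliding geodesics, this lightsphere is tangent to $M$ at two distinct points $P_1$ and $P_2$. A local focal analysis near $V_c$ shows that the past lightcone at the caustic bulges past $M$ on the interior side of the segment between $P_1$ and $P_2$, so the associated lightsphere either directly intrudes into the interior, or after an infinitesimal displacement of the vertex within the congruence one obtains a nearby vertex whose lightsphere is externally tangential to $M$ and yet intrudes into the interior, violating the tangential condition.

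The main obstacle I anticipate is making the infinitesimal deformation and focal-point arguments rigorous in a general curved spacetime. The key technical inputs are the smoothness of the normal null congruences $C$ and $\underline{C}$ away from caustics, the inclusion $\partial \mathcal{J}^+(M) \subset C \cup \underline{C}$ used in the proof of Lemma 2.1, and a local analysis at a caustic point showing that the generators cease to bound $\partial \mathcal{J}^-(V_c)$ past the focal point. Since the entire argument is local around $P$ or $V_c$, no global causal pathologies should obstruct the proof, and the hard step is essentially packaging the standard Lorentzian focusing picture into a clean geometric dichotomy between caustic formation and interior penetration of tangent lightspheres.
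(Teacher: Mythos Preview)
Your overall strategy matches the paper's: contrapositive in both directions, using Lemma 2.1 and its converse to trade between intersections of normal null geodesics and tangent lightspheres. The paper organizes both directions around a single intermediate object, a lightsphere tangent to $\partial H$ at \emph{two} points. From an intersection of normal null geodesics one obtains (by Lemma 2.1 applied at each contact) a doubly tangent lightsphere, and then an enlarged lightsphere kept tangent at only one of those points reaches inside $H$. Conversely, from a tangent lightsphere that reaches inside one passes to a doubly tangent lightsphere whose vertex lies, by the converse of Lemma 2.1, on two distinct normal null geodesics. The paper's proof is terse and leaves the enlarging and shrinking steps implicit, but the two-point-tangency device is the correct organizing idea.

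Your second direction is essentially this. Your first direction, however, has a genuine gap: the assertion that nearby normal null geodesics ``must already have crossed $N_l$ at or before $V$ in order for a member of the past lightcone at $V$ to reach the opposite side of $M$'' does not follow. The generator of the lightcone at $V$ that lands at the interior point $P'$ is not a member of the normal null congruence of $M$, so its existence says nothing directly about intersections \emph{within} that congruence; the argument conflates lightcone generators through $V$ with normal null geodesics of $M$. The repair is precisely the two-point-tangency move you already use in the other direction: the intruding lightsphere must cross $\partial H$ at some $Q\neq P$, so slide $V$ toward $P$ along $N_l$ until the lightsphere becomes tangent at a second boundary point, and then by your own converse of Lemma 2.1 the new vertex lies on the normal null geodesic there as well, producing the required intersection.
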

\begin{proof}
      {Given a spatial region $H$ on the Cauchy slice $\Sigma$, its co-dimension-1 spatial boundary is denoted as $\partial H$.}
     If the normal condition is violated for $\partial H$, i.e., the normal null geodesics intersect to form a caustic $C$ on the $t=t_c$ Cauchy slice. The lightsphere formed by the intersection of the light cone at $C$ and the $t=0$ Cauchy slice must be tangential to $\partial H$ at two points. Then,  {a larger} lightsphere tangential to $\partial H$ at  {one of those two points} which must reach the inside of the region $H$, and the tangential condition is violated. On the other hand, if the tangential condition is violated, there must exist a lightsphere $L$ tangential to $\partial H$ at two points. As a result, the normal null geodesics emitted from those two points must intersect to form a caustic $C$ where the light cone on it intersects with the $t=0$ Cauchy slice, forming the lightsphere $L$, and the normal condition is violated. To summarize, the normal condition is equivalent to the tangential condition.
\end{proof}
}

    In Figure \ref{convexconcave}, we illustrate the Rindler concave surfaces and Rindler convex surfaces from the (non)intersections of worldlines of accelerating observers. We can see from the figure that when a set of accelerating observers could exist whose worldlines do not intersect, the bifurcation surface of these accelerating observer horizons is Rindler convex. On the other hand, when the accelerating observer worldlines cannot avoid intersecting with each other, the corresponding bifurcation surface of their horizons is Rindler concave. Typical examples of Rindler convex surfaces in gravitational spacetimes include the {bifurcation surface of cosmological horizon in dS spacetime, bifurcation surfaces of causal wedges in AdS spacetime, spherically symmetric surfaces outside the trapped/anti-trapped surface in spherically symmetric spacetime, etc.}
    Explicit geometric constructions of Rindler convex surfaces could be constructed from these two normal and tangential conditions,  examples of which could be found in \cite{Ju:2023bjl}, and we provide explicit geometric constructions in several types of gravitational spacetime in {Appendix A}.

    The Rindler convex condition simplifies in certain special geometries, e.g. in a $T_{\mu\nu}k^\mu k^\nu=C_{\rho\mu\nu\sigma}k^\rho k^\sigma=0$ spacetime (null vacuum \footnote{
    A null vacuum spacetime background allows a surface to be Rindler-convex on both sides, i.e. well-defined accelerating observers could live on either side of the surface. To our knowledge, null vacuum is a very strict condition, and viable examples include the Minkowski spacetime, the AdS, and the dS spacetime.}) on a Cauchy slice with a vanishing extrinsic curvature, Rindler-convexity is equivalent to geodesic convexity \cite{Ju:2023bjl}.  Additionally, when a black hole is present, the gravitational lensing effect ensures that any Rindler-convex region must encompass the event horizon.
    We name the wedge in the spacetime which observers can access the generalized Rindler wedge (GRW).
    
    The perspective of accelerating observers offers a method for consistently defining gravitational systems, which provides a physical origin for the convexity condition in defining the corresponding subalgebra in \cite{Leutheusser:2022bgi}. By employing reference frame transformations, we can exclude the region within the horizon and compute the gravitational entanglement entropy of the GRW using the replica trick. Further details can be found in \cite{Balasubramanian:2013rqa, Ju:2023bjl}. Ultimately, this approach leads to the entanglement entropy of the GRW being proportional to the horizon area
    \begin{equation}
        S = \frac{A}{4G}.
    \end{equation}
    
    {Note that {this procedure to calculate the entanglement entropy follows and generalizes \cite{Balasubramanian:2013rqa} for the ``hole" in flat spacetime, while this procedure is also similar to what \cite{Casini:2011kv} has done in the bulk for calculating the entanglement entropy of a spherical subregion at the boundary. In \cite{Casini:2011kv}, a bulk isometry transforms the RT surface of the boundary spherical region to a horizon of a topological black hole with a hyperbolic horizon, which is in fact a hyperbolic surface in AdS. After the transformation, the entanglement entropy in the original spacetime becomes the thermal entropy in the spacetime with a horizon. Here similarly, we could also have a thermal system after transforming to the observer spacetime.  The difference is that here we do not require the transformation to be an isometry as we do not need to calculate the entanglement entropy from the corresponding thermal entropy. As we already have a well-defined gravitational subsystem, we could directly calculate the entanglement entropy from a gravitational path integral method utilizing the replica trick following \cite{Balasubramanian:2013rqa}. As we will see in the next section, in fact \cite{Casini:2011kv} corresponds to the case that EW=GRW, i.e. the entanglement wedge of the boundary spherical region is a GRW, one special case here. A calculation of the modular Hamiltonian in our system needed in the holographic construction of \cite{Casini:2011kv,Song:2016gtd,Jiang:2017ecm} will be presented in another work \cite{Ju:2025mvz}.}}

Up to this point, an important consequence for introducing Rindler-convexity becomes apparent: it guarantees that the surface area of a subregion is smaller than the region that contains it, provided both are Rindler-convex, as proven under the null energy condition in \cite{Ju:2023bjl}. This property is crucial since the holographic principle informs us that the gravitational degrees of freedom of a region reside on its spatial boundary and a potential issue arises when the region is highly irregular with an exceptionally large surface area. The Rindler-convex condition prevents this problem by precluding the region from being excessively ``rough".

 {In \cite{Balasubramanian:2023dpj,Bousso:2022hlz}, the authors generalized 
the entanglement wedge to the gravitational generalized entanglement wedge, which is the bulk spacetime subregion $EW(a)$ that could be reconstructed from a given bulk subregion $a$, with its bounding surface the minimal surface covering $a$. This gravitational entanglement wedge is also a new method to define gravitational subsystems. We now prove a theorem to show that under reasonable conditions, a Rindler convex region is also a generalized entanglement wedge defined in \cite{Balasubramanian:2023dpj,Bousso:2022hlz}.
\begin{theorem}
\textit{If the null energy condition (NEC) holds, then on a Cauchy slice \(\Sigma\), within the causally connected subset\footnote{Here “causally connected subset” means the set of points \(p\in\Sigma\) for which there exists at least one future- or past-directed null geodesic from \(p\) that intersects a null geodesic from the Rindler-convex region. In AdS or Minkowski this is all of \(\Sigma\); in special spacetimes such as de Sitter or “bag of gold” spacetimes it may be a proper subset. In the former spacetimes, a Rindler convex region is proved to be the spatial part of a generalized entanglement wedge.}, the minimal surface enclosing a Rindler-convex region coincides with its bounding surface.}  
\end{theorem}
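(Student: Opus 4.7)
The plan is to establish $\text{Area}(S) \geq \text{Area}(\partial R)$ for every codimension-two spacelike surface $S$ homologous to $\partial R$ and enclosing $R$ within the causally connected subset, which identifies $\partial R$ as the Bousso--Pennington minimal surface and so exhibits $R$ as the spatial slice of a generalized entanglement wedge. The tool is a null focusing argument built on the normal criterion of Rindler-convexity just established in Theorem 2.2, combined with the null Raychaudhuri equation under NEC.

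First I would construct the null hypersurface $\mathcal{N} = \mathcal{N}^+ \cup \partial R \cup \mathcal{N}^-$, where $\mathcal{N}^+$ and $\mathcal{N}^-$ are the future- and past-directed outgoing null congruences normal to $\partial R$ on the exterior side of $R$; by the normal Rindler-convexity condition both congruences are caustic-free throughout their extension into the exterior. Next I would invoke the Raychaudhuri equation
\begin{equation*}
\frac{d\theta}{d\lambda} \leq -\frac{\theta^{2}}{D-2} - \sigma^{2} - R_{\mu\nu} k^{\mu} k^{\nu},
\end{equation*}
and observe that under NEC any negative value of $\theta$ propagates to $-\infty$ in finite affine parameter. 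Since $\mathcal{N}^{\pm}$ carry no caustics, this forces $\theta \geq 0$ along every generator, so the cross-sectional area $A(\lambda)$ is nondecreasing as one flows outward from $\partial R$. The causally connected hypothesis then implies that each point of $S$ lies on a generator of $\mathcal{N}^{+} \cup \mathcal{N}^{-}$ traced back to $\partial R$, giving a well-defined null projection $\pi : S \to \partial R$. A standard computation in Bondi-like coordinates adapted to $\mathcal{N}$ shows that the induced area element on the spacelike surface $S$ dominates that of the null cross-section at the corresponding affine parameter, which in turn dominates $\text{Area}(\partial R)$ by the monotonicity of $A(\lambda)$; chaining these inequalities yields the desired bound.

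The main obstacle is the global control of $\pi$: without additional input, $S$ could intersect a single null generator multiple times or fail to be hit by some generators altogether, so one must carefully bookkeep intersection multiplicities and orientations so that the sum of local area contributions is at least $\text{Area}(\partial R)$. The causal connectedness hypothesis is precisely what is needed to guarantee surjectivity of $\pi$ and to rule out such pathologies. In spacetimes where the hypothesis fails (such as de Sitter or bag-of-gold geometries), generators of $\mathcal{N}^{\pm}$ truly cannot reach certain regions of $\Sigma$, and the equivalence between Rindler-convex regions and generalized entanglement wedges correspondingly degrades; this is the reason the theorem is phrased only within the causally connected subset.
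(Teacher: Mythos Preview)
Your opening step---deducing $\theta\ge 0$ on the outgoing normal null congruence $\mathcal N^{\pm}$ from Rindler-convexity together with Raychaudhuri under NEC---matches the paper and is correct. The gap is in the next move. You assert that ``each point of $S$ lies on a generator of $\mathcal N^{+}\cup\mathcal N^{-}$'' and then define a null projection $\pi:S\to\partial R$ along those generators. But $\mathcal N^{+}$ is a future-directed null hypersurface emanating from $\partial R\subset\Sigma$; it meets $\Sigma$ only at $\partial R$ itself, and likewise for $\mathcal N^{-}$. A competing surface $S$ sitting on $\Sigma$ strictly outside $R$ therefore does \emph{not} lie on $\mathcal N^{\pm}$ at all, so the projection $\pi$ is undefined and the Bondi-type area comparison you invoke has no surface to act on. The ``causally connected'' hypothesis says only that null geodesics from points of $S$ eventually intersect null geodesics from $R$; it does not place $S$ on the light-sheet of $\partial R$. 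The multiplicity and surjectivity issues you flag are downstream of this more basic failure.

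The paper closes this gap by bringing in a \emph{second} congruence. One restricts attention to an extremal competitor $\xi$ (zero mean curvature), and launches its ingoing normal null congruence $\Theta_{2}$. Extremality gives $\theta|_{\xi}=0$, and Raychaudhuri under NEC forces $\theta\le 0$ along $\Theta_{2}$, so cross-sectional areas decrease from $\xi$ inward. The outgoing congruence $\Theta_{1}$ from $\partial R$ and the ingoing $\Theta_{2}$ from $\xi$ now meet at a spacelike codimension-two locus $Int$, and one reads off $\mathrm{Area}(\partial R)\le\mathrm{Area}(Int)\le\mathrm{Area}(\xi)$ directly from the two monotonicities. The essential idea you are missing is that the competitor surface supplies its own light-sheet with the opposite sign of expansion; one never needs to project $S$ onto $\mathcal N$.
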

\begin{proof}
By Raychaudhuri’s equation for a null geodesic congruence, under the NEC any null congruence with negative expansion will inevitably develop caustics. Since we require the outgoing normal null congruence \(\Theta_1\) of the Rindler-convex surface \(R\) to be non-intersecting, \(\Theta_1\) must always have non-negative expansion.
Now suppose there exists a minimal hypersurface \(\xi\) of zero extrinsic curvature that also encloses \(R\). The ingoing normal null congruence \(\Theta_2\) orthogonal to \(\xi\) must then have non-positive expansion by Raychaudhuri’s equation. Let \(Int\) denote the intersection of \(\Theta_1\) and \(\Theta_2\); \(Int\) is a spacelike, codimension-2 surface. Because \(\Theta_1\) has non-negative expansion and \(\Theta_2\) has non-positive expansion, the areas satisfy
\[
\mathrm{Area}(R)\;\le\;\mathrm{Area}(Int)\;\le\;\mathrm{Area}(\xi).
\]
Hence, among all surfaces enclosing \(R\), its bounding surface \(R\) has the smallest area, as claimed.
\end{proof}
}

    \subsection{Holographic dual of GRW and the boundary  time band}
\noindent In the context of AdS/CFT correspondence, the concept of ``hole-ography" relates differential entropy on the boundary to the non-extremal surface area in the bulk \cite{Balasubramanian:2013lsa,Balasubramanian:2018uus,Engelhardt:2018kzk}. 
The concept of the shape constraint for the hole, which is equivalent to the condition of Rindler-convexity, was also introduced by Hubeny in \cite{Hubeny_2014} in the study of differential entropy. This idea was motivated by the reversibility of the null geodesics shooting from a hole in the bulk and a time strip on the boundary, which, combined with the causality constraint, corresponds to the tangential condition of Rindler-convexity. The information-theoretic interpretation of differential entropy was explicitly discussed in \cite{Czech_2015}. 

However, differential entropy is not a fine-grained entropy that could be a von Neumann entropy of a certain quantum state. In \cite{Ju:2023bjl}, we proposed a corresponding interpretation of the differential entropy as the entanglement entropy of a CFT state with long-range entanglement being ``cut" utilizing the boundary time cutoff induced by bulk GRWs, namely the GRW subregion duality.

The proposed GRW subregion duality suggests that the GRW in the bulk corresponds to a boundary spacetime subregion that results from the intersection of the GRW with the boundary. This duality can be understood as an extension of concepts such as causal holographic information \cite{Hubeny:2012wa,DeClerck:2019mkx}, ``time-band subalgebra-subregion" duality \cite{Leutheusser:2022bgi}, and AdS-Rindler subregion duality \cite{Parikh:2012kg,Sugishita:2022ldv}. However, the challenge in proving this duality lies in mathematically defining the ``time cutoff" on the boundary. In \cite{Balasubramanian:2013lsa,Balasubramanian:2018uus}, it is suggested that the differential entropy  reveals the entanglement between scales \cite{Balasubramanian:2011wt}, specifically the entanglement between UV/IR degrees of freedom on the boundary divided by the time cutoff. However, this interpretation is somewhat ambiguous because it is difficult to explicitly determine which IR degrees of freedom have been cut due to the energy-time uncertainty of the observers on the boundary, which is constrained to a finite time interval. \cite{Ju:2023bjl} proposed an alternative interpretation: the observers can observe {\it all degrees of freedom} on the boundary, but the observer cannot detect the long-range entanglement structure due to the causality constraint imposed by the time strip.

To explain this explicitly, Figure \ref{bcutoff} shows a boundary spacetime subregion with a ``zigzag" shape time cutoff, where the vertical axis is the time direction. This ``zigzag" time cutoff excludes entanglement between regions $A$ and $B$ as their causal diamond is not complete due to the time cutoff, while preserving the entanglement structure within regions $AE$ and $BE$ \cite{Ju:2023bjl}. This exclusion corresponds to the {vanishing of the} conditional mutual information $I(A;B|E)$. We can interpret this process as counting the ``out legs" after the time cutoff, where the ``legs" illustrate the entanglement between infinitesimal subregions, as shown on the right side of figure \ref{bcutoff}. Therefore, we can get the residual entropy \cite{Hubeny_2014} associated with this ``zigzag" spacetime subregion as:
\begin{equation}\label{resid}
S^{res}_{ABE}=S(ABE)+I(A;B|E)=S(AE)+S(BE)-S(E).
\end{equation}

{As an example, for a 3-partite GHZ state $\ket{GHZ}_{ABE}=\frac{1}{\sqrt{2}}(\ket{0_A0_B0_E}+\ket{1_A1_B1_E})$, its reduced density matrix is $\rho_{AB}=\frac12(\ket{0_{A}0_{B}}\bra{0_A0_B}+\ket{1_{A}1_{B}}\bra{1_A1_B})$, the conditional mutual information $I(A;B|E)=S_{AE}+S_{BE}-S_E-S_{ABE}=\log2$. We can construct another state $\tilde\rho_{ABE}=\frac12(\ket{0_{A}0_{B}0_{E}}\bra{0_A0_B0_{E}}+\ket{1_{A}1_{B}1_{E}}\bra{1_A1_B1_{E}})$. It has the same reduced density matrices ($\rho_{BE}$ and $\rho_{BE}$) as $\rho_{ABE}$, while the conditional mutual information of this $\tilde\rho_{ABE}$ state vanishes. In this sense, ``cutting" the entanglement between $A$ and $B$ in $\rho_{ABE}$ results in the $\tilde\rho_{ABE}$ state, with its von Neumann entropy given by equation (\ref{resid}).}

{In the same vein,} considering the entirety of $S^{res}_{ABE}$ and introducing additional regions $C, D, \ldots$, we can extend the counting process along the spatial direction. If we take regions $A$, $B$, $C, \ldots$ to be sufficiently small such that we can take the continuous limit, this formula will coincide with the formula for the differential entropy \cite{Balasubramanian:2013lsa}.

\begin{figure}[H]
    \centering    \includegraphics[width=0.9\textwidth]{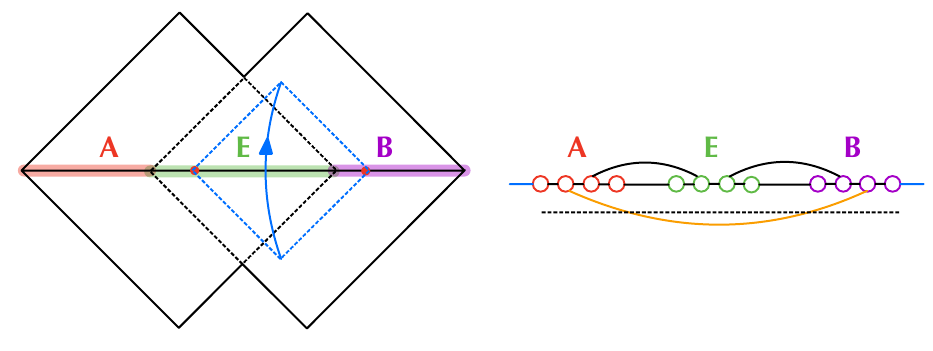} 
    \caption{Left: A boundary `zigzag' shape spacetime subregion with a time cutoff. The horizontal axis is the spatial direction and the vertical axis is the time direction. A boundary observer, following the blue curve as their worldline, has the capacity to observe the entanglement structure within a causal diamond outlined by dashed blue lines. This observer can freely exist within the zigzag region delimited by the time cutoff. Consequently, they can observe any entanglement structure within regions $AE$ or $BE$. However, the entanglement structure between regions $A$ and $B$ remains unobservable due to the time cutoff. Right: The illustration showcases the entanglement structure. Solid lines and curves represent the `entangling pairs,' and the number of blue lines indicates the entanglement entropy $S_{ABE}$. The black lines and curves depict the observable entanglement structure, while the orange curve signifies the entanglement that becomes unobservable due to the time cutoff (represented by the dashed line). Figure taken from \cite{Ju:2023bjl}.} \label{bcutoff}
\end{figure}

With this understanding, we can also give a large class of causal holographic information an information-theoretic interpretation as follows. For the causal wedge in the bulk which is exactly a GRW, the intersection of it with the boundary corresponds to the causal diamond of the boundary subregion. Consider an oval boundary subregion as an illustration. Its causal diamond can be perceived as a ``time cutoff" mechanism that preserves entanglement for distances shorter than the minor axis of the oval, while cutting off correlations beyond this range. The causal holographic information \cite{Headrick:2014cta,Hubeny:2012wa,DeClerck:2019mkx} is the number of the remaining `out legs' after this cutoff.

One remaining question is whether the concept of a ``reconstructed" state, where all long-range correlations are ``cut", exists in the context of quantum information theory. In \cite{Ju:2023bjl}, tripartite toy models involving GHZ and W states are discussed. The conclusion is that the former can be ``reconstructed," while the latter cannot. However, it is believed that the process of cutting all long-range correlations can always be achieved in holographic field theory. We substantiate this proposition by introducing a ``space cutoff" in the bulk geometry, which removes the IR region inside the Rindler-convex hole on the AdS-Cauchy slice. Through an analysis of the entanglement wedge shapes, we argue that this bulk geometry precisely corresponds to the reconstructed state on the boundary. 

\subsection{Further consequences and spacetime cutoff in the bulk}

In this subsection, we summarize some further consequences of this GRW physics, especially on the physics of imposing spacetime cutoff in the bulk.

{\it Bulk spacetime cutoff.} After the discussion of time cutoff on the boundary, we extend the concept of time cutoff from the boundary field system to the gravitational system. This provides a novel approach to investigate the entanglement structure in gravitational systems. By introducing a time cutoff in the bulk, we observe that the condition of Rindler-convexity is relaxed, resulting in certain regions that were initially Rindler-concave becoming Rindler-convex because their normal null geodesics intersect in the spacetime region now removed by the time cutoff. Significantly, we demonstrate that this bulk time cutoff also serves to ``cut" the long-range entanglement in the gravitational spacetime, analogous to the effect of the boundary time cutoff on the boundary field theory\footnote{This analogy to the effect does not imply that this time cutoff in the bulk has an exact holographic correspondence.}.

{\it Consequences of bulk spacetime cutoff.} According to our observer interpretation, the entanglement structure between regions $A$ and $B$ is eliminated due to the inability of observers in these regions to communicate within a finite time interval {after imposing the time cutoff in the gravitational spacetime}. This effectively renders the entanglement structure between $A$ and $B$ unobservable. From a geometric perspective, this implies that the normal null geodesics emitted from the surfaces of $A$ and $B$ never intersect, either in the future or in the past. Consequently, the combination of regions $A$ and $B$ becomes Rindler-convex. Typically, Rindler-convex regions are topologically trivial because the normal null geodesics emitted from disjoint subregions $A$ and $B$ must intersect, violating the convexity condition. However, the introduction of a time cut-off relaxes the Rindler convexity condition, allowing the disconnected region $AB$ to become Rindler convex with an entanglement entropy of $S_{AB} = \text{Area}(A+B) = S_A + S_B$. As a result, the mutual information between regions $A$ and $B$ vanishes, indicating that the entanglement between them is effectively ``cut off". 

The observation that a time cutoff can ``cut" long-range entanglement can be utilized to probe the structure of bipartite or multipartite entanglement in a gravitational system. We will explore this topic in more detail in the following sections.

{\it Holographic observer concordance.} The concluding aspect of \cite{Ju:2023bjl} that we would like to introduce is the concept of holographic observer concordance. Briefly, this refers to the fact that bulk subsystems partitioned by bulk observers are dual to boundary subsystems that are partitioned by boundary observers induced from these bulk observers at the boundary, separately. This statement holds true even when a time/space cutoff is introduced in the bulk, leading to the concept of holographic observer concordance, asserting that the partitioning of degrees of freedom by observers in the bulk and the boundary is in concordance through holography. 
The utility of this holographic observer concordance concept will be of much use in Section 4, where we aim to establish the holographic correspondence of the separation theorem on the boundary, building on its observer interpretation in the bulk.

\section{Bipartite entanglement for gravitational subsystems}
\noindent {Studying the entanglement entropy of more general gravitational subregions} helps us understand the shape dependence of entanglement entropy of more general gravitational subsystems. To discern more details about the entanglement structure, the next step should be to analyze the quantum entanglement between different subregions of the gravitational system {for these general gravitational subsystems}, in order to find out the more explicit quantum entanglement structure. Therefore, an immediate question arises: what measure should we use to characterize the quantum entanglement between gravitational spatial subregions? 

Considering the whole Cauchy slice in the gravitational spacetime as a pure state, the union of two or many spatial subregions is in general a mixed state. As von Neumann entropy is not an adequate measure for quantum entanglement in mixed states, we need to use alternative quantities to assess the entanglement between spatial subregions. The analysis of quantum entanglement in mixed states has been a topic of study for a long time with many different measures proposed to quantify the quantum entanglement. In this work, we utilize squashed entanglement, which is a faithful \cite{Brand_o_2011} but challenging measure to compute. Being a faithful measure, the subsystems are separable if and only if the squashed entanglement is zero, and we aim for a way to determine when it equals zero for our gravitational subsystems. We find a sufficient condition under which the squashed entanglement vanishes so that the quantum entanglement disappears and provides a correspondence between the separability of quantum states in the gravitational subregions and the geometric structure of the subregions. We will first concentrate on the bipartite case, and then we extend our results also to multipartite entanglement in the subsequent section.

In this section, we will first review the definition and properties of squashed entanglement in quantum information theory and then derive the sufficient condition for zero squashed entanglement for gravitational subregions utilizing the Rindler convexity condition. We will explain this result with observer physics and compare the results with previous holographic results revealing their shortcomings in the calculations. 

\subsection{Squashed entanglement in quantum information theory}
\noindent
There are many entanglement measures for mixed states, such as entanglement of purification \cite{Terhal_2002}, distillable entanglement, entanglement of formation \cite{Bennett_1996pur,Bennett_1996}, relative entropy of entanglement \cite{Vedral_1997}, squashed entanglement \cite{Christandl_2004}, etc. Here, we utilize the squashed entanglement as the quantity to measure the bipartite quantum entanglement of the gravitational degrees of freedom within the subregions. This choice is due to the outstanding properties that squashed entanglement exhibits, which we will mention below. 

Squashed entanglement is defined as one-half of the infimum of the conditional mutual information (CMI):
\begin{equation}
    E_{s q}\left(\rho_{A; B}\right)=\inf \left\{\frac{1}{2} I(A ; B \mid E)_\rho: \rho_{A B E} \text { is an extension of } \rho_{A B}\right\},
\end{equation}
where $I(A ; B \mid E)$ is:
\begin{equation} \label{CMI}
    I(A ; B \mid E):=S(A E)+S(E B)-S(E)-S(A E B).
\end{equation}
Note that the coefficient $\frac12$ is only a convention with which the squashed entanglement will coincide with the von Neumann entropy when $\rho_{AB}$ is a pure state. 

Among various measures of entanglement in mixed states, squashed entanglement satisfies most properties proposed as useful for an entanglement measure, such as non-negativity,  monogamy \cite{Koashi_2004}, asymptotic continuity \cite{Alicki_2004}, convexity, and LOCC monotonicity \cite{Christandl_2004}. Most importantly for our subsequent argument, squashed entanglement is a \textbf{faithful} measure \cite{Brand_o_2011}, meaning that squashed entanglement will vanish \textbf{if and only if} the mixed state $\rho_{AB}$ is separable. A simple explanation of this faithful condition is as follows. It was first proven in \cite{Hayden_2004} (theorem 6) that a state $\rho_{A B E}$ on $\mathcal{H}_A \otimes \mathcal{H}_B \otimes \mathcal{H}_E$ saturates strong subadditivity with equality if and only if there is a decomposition of the system $E$ into a direct sum of tensor products 
\begin{equation}
\mathcal{H}_E=\bigoplus_j \mathcal{H}_{e_j^L} \otimes \mathcal{H}_{e_j^R}
\end{equation} such that 
\begin{equation}\label{bimarkov}
    \rho_{A B E}=\bigoplus_j p_j \rho_{A e_j^L} \otimes \rho_{e_j^R B},
\end{equation} with states $\rho_{A e_j^L}$ in $\mathcal{H}_A \otimes \mathcal{H}_{e_j^L}$ and $\rho_{e_j^R B}$ in $\mathcal{H}_{e_j^R} \otimes \mathcal{H}_B$, and classical probabilities $p_j$. Take partial trace on the subsystem $E$ in equation (\ref{bimarkov}), and we could get
\begin{equation}
    \rho_{A B}=\sum_j p_j \rho_{A_j} \otimes \rho_{B_j}.
\end{equation}
\ie{} the $AB$ reduction of state $\rho_{ABE}$ is separable indicating vanishing quantum entanglement between A and B. 

Squashed entanglement, serving as a lower bound on the entanglement of formation and an upper bound on distillable entanglement, possesses nearly all the properties we desire for measuring quantum entanglement \cite{Headrick:2019eth}, excluding classical correlations \cite{Vedral_2003}. However, its calculation is challenging because finding the infimum of CMI among all possible extensions of $\rho_{ABE}$ is complicated, as the dimension of $\mathcal{H}_E$ might be arbitrarily large. Nonetheless, in gravitational systems, we have managed to determine the subsystem $E$ minimizing the CMI to be zero in a broad range of situations, and therefore we provide a sufficient condition under which the squashed entanglement vanishes, which we show in the next subsection.

\subsection{Squashed entanglement for gravitational subsystems}
\noindent

As reviewed in Sec.2, in \cite{Ju:2023bjl} we have calculated the gravitational entanglement entropy for spatial subregions satisfying a so-called Rindler convexity condition. 
This global Rindler convexity condition is important as it reflects the nonlocal property of gravity. We now use the measure of squashed entanglement to detect the entanglement structure of gravitational spacetime, i.e. calculating the squashed entanglement of these Rindler convex gravitational subsystems, which will reveal more interesting consequences of this condition. 
Our fundamental observation is derived from the fact that strong subadditivity is always saturated for Rindler convex regions $AE$, $EB$, and $ABE$
\begin{equation}\label{ASSA}
        Area(AE)+Area(EB)= Area(AEB)+Area(E).
\end{equation}
\begin{equation}\label{SSA}
        S(\rho_{AE})+S(\rho_{EB})= S(\rho_{AEB})+S(\rho_{E}).
\end{equation}
From the definition of (\ref{CMI}), this is a clear indication that the conditional mutual information between subsystems $A$ and $B$ vanishes, given the subsystem $E$ as a condition, i.e. the CMI $I(A ; B \mid E)=0$. 
As the squashed entanglement is the infimum of CMI $I(A ; B \mid E)$ and it is non-negative, we could see that the squashed entanglement between $A$ and $B$ vanishes, i.e. \begin{equation} E_{sq}(A;B)=0.\end{equation}
In other words, there is no quantum entanglement between $A$ and $B$, or equivalently, $A-E-B$ forms a quantum Markov chain in that order \cite{Hayden_2004}. 

This result strongly relies on the fact that subregions $AE$, $EB$, $E$, and $ABE$ all have to be Rindler convex regions. As $E$ is always a Rindler convex region when $AE$ and $EB$ are both Rindler convex regions, here we only need to specify the condition that all $AE$, $EB$, and $ABE$ have to be Rindler convex regions. Thus the conclusion should be that as long as subregions $AE$, $EB$, and $ABE$ are Rindler convex regions, the quantum entanglement between $A$ and $B$ vanishes. 

\subsubsection{A subtlety when $A$ or $B$ is not Rindler convex}
There is a subtle point here. For the strong subadditivity condition in (\ref{SSA}) to hold, subregions of $A$ and $B$ may not need to be Rindler convex as these two subsystems do not appear in the formula. Since regions $A$ and $B$ may be Rindler concave, we may not be able to construct well-defined accelerating observers associated with $A$ or $B$, and it is unclear whether the gravitational degrees of freedom inside a Rindler-concave region could be well-defined or not due to possible nonlocal distributions of gravitational degrees of freedom. For this reason, we denote the remaining Hilbert spaces of $\mathcal{H}_{AE}$ and $\mathcal{H}_{BE}$ after tracing out $\mathcal{H}_{E}$ as $\mathcal{H}_A$ and $\mathcal{H}_B$ respectively. Note that in general, $\mathcal{H}_A$ and $\mathcal{H}_B$ defined in this way are not the Hilbert spaces of the degrees of freedom inside concave subregions $A$ and $B$. For this to be precisely defined, we have to make the following reasonable assumption: if $M$ and $N$ are both Rindler-convex regions and $M\supset N$, we have
\begin{equation}\label{contain}
\mathcal{H}_M=\mathcal{H}_N\otimes \mathcal{H}_{M/N}\quad\quad \text{if} \quad M\supset N.
\end{equation}
This assumption can be understood as follows. If $M\supset N$, the Rindler observers of $N$ have access to an extra region ($M-N$) than the Rindler observers of $M$. Consequently, we anticipate that the ignorance of $M$'s observer will be larger than that of $N$'s observer so that we have (\ref{contain}). Once again, we adopt the conservative viewpoint and do not interpret $\mathcal{H}_{M/N}$ as the Hilbert space of the concave region $M-N$.

Therefore, even though we have demonstrated that $E_{sq}(A;B)=0$, interpreting the physical implications of this result becomes challenging when either $A$ or $B$ is Rindler-concave. As a result, when $A$ or $B$ is Rindler concave, we can alternatively analyze the entanglement structure between two Rindler-convex subregions $a\subset A$ and $b\subset B$ instead {, as shown in Figure \ref{separate}}.
As strong subadditivity is saturated, the whole state $\rho_{ABE}$ can be factorized as in (\ref{bimarkov}). Now, instead of taking the partial trace of subsystem E, we take the partial trace of $\mathcal{H}_{AE/a}$ and $\mathcal{H}_{BE/b}$ with \begin{equation}\mathcal{H}_{AE}=\mathcal{H}_a\otimes \mathcal{H}_{AE/a}, ~~\mathcal{H}_{BE}=\mathcal{H}_b\otimes \mathcal{H}_{BE/b}.\end{equation}

With partial trace $\Tr_{AE/a}$ acting trivially on $\mathcal{H}_{b}$ and $\Tr_{BE/b}$ acting trivially on $\mathcal{H}_{a}$, finally we can get:
\begin{equation}
    \rho_{a b}=\sum_j p_j \rho_{a_j} \otimes \rho_{b_j},
\end{equation}
\ie{} the state is separable (no quantum entanglement exist) for Rindler-convex subregions $a$ and $b$. 
It is important to note that while the condition $E$ ensures that $I(A;B|E)$ vanishes, in general, $I(a;b|E)$ is not zero. Moreover, for general Rindler convex shapes 
 of $a$ and $b$, it is difficult to find a Rindler-convex region $e$ that makes $I(a;b|e)$ vanish. It appears to contradict the fact that $\rho_{ab}$ is a separable state with zero squashed entanglement. However, in the case of squashed entanglement, the choice of the extension $e$ and $\rho_{abe}$ is arbitrary. $e$ does not have to correspond to a state in a Rindler-convex subregion of the whole state, and it does not even have to be a reduced state derived from the entire pure state. This is because the quantum entanglement between $a$ and $b$ is determined solely by $\rho_{ab}$, regardless of which state $\rho_{ab}$ is reduced from. Although finding the optimal extension $\rho_{abe}$ with vanishing $I(a;b|e)$ is challenging, based on the previous conclusion that $\rho_{ab}$ is a separable state, we know that such an extension must exist and therefore the conclusion that $\rho_{ab}$ is a separable state with zero squashed entanglement is valid.

\subsubsection{Geometric conditions for Rindler convex regions $a$ and $b$ to have no quantum entanglement}

In the above, we have shown that $\rho_{ab}$ is separable {for Rindler-convex regions $a$ and $b$} if  $a\subset A$ and $b\subset B$, given $AE$, $BE$, and $ABE$ all satisfing the Rindler-convexity condition. Now, we will analyze the {\it geometric conditions under which Rindler-convex subregions $a$ and $b$ can be separable having no quantum entanglement}, without specifying the exact regions $A$ and $B$. In other words, we want to determine the condition under which we can construct regions $A \supset a$, $B \supset b$ and $E$ given $a$ and $b$, such that $AE$, $BE$, $E$, and $ABE$ are all Rindler-convex, ensuring the saturation of strong subadditivity.

Geometrically, the answer to this question is that this is only possible if we can construct a convex region $E$ that `\textbf{separates}' regions $a$ and $b$. This leads to the following geometric condition, named the  ``separation theorem". 

\textit{Separation Theorem:} there is no gravitational quantum entanglement between the degrees of freedom inside two Rindler convex spatial subregions $a$ and $b$ on a Cauchy slice, i.e. the whole state is separable for $a$ and $b$, if there exists a Rindler-convex region $E$ that \textbf{separates} $a$ and $b$.

Here, the question is then what ``$E$ separates $a$ and $b$" means. The precise definition of `{separate}' is the follows. There are two equivalent, covariant, and {\it mathematically strict} definitions of the term `{separate}' here.

\textit{Normal Definition of `separate'.}
On a Cauchy slice, subregions $a$ and $b$ are `separated' by $E$ if and only if: any normal null geodesic emitted from $a$ and $b$ does not intersect before they intersect with the normal null geodesic emitted from $E$.

\textit{Tangential Definition of `separate'.} Any lightsphere externally tangential to both $a$ and $b$ must pass through $E$. It is worth noting that in the case when Rindler-convexity is equivalent to geodesic convexity, the term `separate' implies that any geodesic which passes through both regions $a$ and $b$ must also pass through the region $E$ (see Figure \ref{separate}).


\begin{proof}
    {\textit{The equivalence of the Normal Definition and the Tangential Definition.}
If the normal definition is violated, we have two normal null geodesics emitted from $a$ and $b$ that intersect at $P$ before they intersect with the normal null geodesic emitted from $E$. Then the lightsphere whose light cone vertex is $P$ will be tangential to both $a$ and $b$ (Lemma 2.1), but does not pass through $E$. This violates the tangential definition. On the other hand, if the tangential definition of ``separation" is violated, \ie{}, there exists a lightsphere tangential to both $a$ and $b$ that does not pass through $E$, then the vertex of the light cone (P) of this lightsphere will be the spacetime point where two normal null geodesics emitted from $a$ and $b$ intersect. As those two null geodesics intersect before they intersect with the normal null geodesic emitted from $E$, this violates the normal definition. 
}
\end{proof}

\begin{figure}[H]
    \centering    \includegraphics[width=0.64\textwidth]{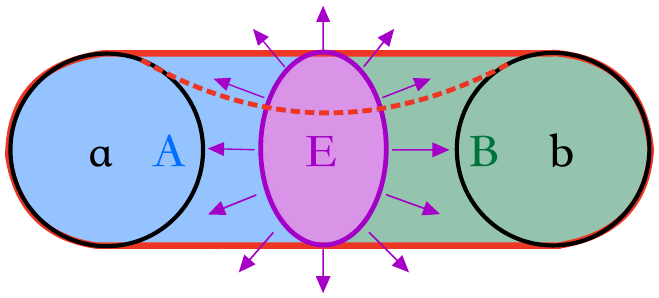} 
    \caption{The spherical regions $a$ and $b$, bounded by the black circle, are separated by the purple-shaded region $E$. The Rindler-convex hull of $ab$ is bounded by the red curve, with $E$ dividing it into the blue-shaded region $A$ and the green-shaded region $B$. The red dashed curve passing through region $E$ represents a lightsphere that is externally tangential to both $a$ and $b$, in accordance with the tangential definition of ``separate". The Rindler observers of the region $E$ reside in regions $A$ and $B$ at this moment are causally disconnected from each other in both the past and future. Regions $AE$, $BE$, $E$, $ABE$, $a$, and $b$ are all Rindler-convex.} \label{separate}
\end{figure}

As shown in Figure \ref{separate}, the definition of `separate' ensures that {I}. region $E$ does not intersect with $a$ or $b$, and {II}. there exist Rindler-convex regions $ABE$, $AE$, $BE$, and $E$, which guarantees the saturation of strong subadditivity (\ref{SSA}). In summary, the separation theorem can be encapsulated in the following slogan.
\begin{equation}\label{slogan}
    \textit{Geometrically separable} \Longrightarrow \textit{Quantum State separable.}
\end{equation}

The contrapositive version of this theorem (the arrow in the opposite direction) would state that if two {Rindler convex} regions cannot be separated by a convex region, then they must have non-vanishing quantum entanglement. Proving this opposite side is indeed difficult because when we cannot find the extension $\rho_{ABE}$ with $I(A;B|E)=0$ on the gravity side, this does not mean that no such a state $E$ could exist in the 
more general range of states in quantum information theory. The challenge arises from the difficulty in calculating $E_{sq}$ in quantum information theory as the dimension of $\mathcal{H}_E$ could be arbitrarily large.

Is this separation theorem too strict? One can conclude that the quantum state on any two non-adjacent regions is separable, i.e. the convex region $E$ is no longer needed, utilizing the same logic based on the gravity theory with any region, including the concave one, having the entanglement entropy proportional to its surface area. However, as we will discuss below when a black hole exists, non-adjacent regions are believed to be entangled due to the intuition from its holographic dual field theory side. This conflicts with the result that any non-adjacent regions are separable, ruling out the possibility of the separation theorem being too strict.

To summarize, though it remains a conjecture whether the contrapositive version of the separation theorem holds, the separation theorem is still believed to serve as a non-trivial effect in the semiclassical limit, which establishes a connection between the geometrical structure and the separability of the corresponding quantum state. 

We summarize the physical findings in this section. In gravitational spacetime, two subsystems of Rindler convex subregions are separable with no quantum entanglement when there is another Rindler convex subregion that separates these two subregions. The explicit mathematical definition of `separate' could be found in section 3.2.2.

\section{Physical interpretations and holographic squashed entanglement}
\noindent

The main message from the previous section is the separation theorem, which states that two Rindler convex subregions in gravity are separable when these two subsystems are separated by a third Rindler convex subregion. In this section, we analyze the physical interpretations of this fact. Compared with the entanglement entropy of a single subregion, the separation theorem provides a more elaborate description of gravitational entanglement structure. This is especially evident when a time cutoff is introduced, as it allows for quantifying the location of entanglement.

\subsection{The location of entanglement}
\noindent
We consider a simple case when $T_{\mu\nu}k^\mu k^\nu=C_{\rho\mu\nu\sigma}k^\rho k^\sigma=0$ (null vacuum). In this case, an infinitely large lightsphere could extend all the way to infinity without converging or diverging. As a result, there exist spacelike surfaces which are convex on both sides. As we proved in \cite{Ju:2023bjl}, if the Cauchy slice has zero extrinsic curvature, these surfaces are planes. Therefore, in a null vacuum, a plane could be an infinitely thin limit of a Rindler convex region, {\it i.e.} Rindler-convex regions can be arbitrarily `thin'. This simplifies the separation of two regions by definition, and two nonadjacent regions can be separated if they can be divided by a plane\footnote{It is worth noting that, when the spatial dimension is greater than two, two regions that can be separated by a convex region are not necessarily separable by a plane. A counterexample is four regions, each near a surface of a tetrahedron; any two of them cannot be separated from the other two by a plane, but they can be separated by the tetrahedron.}. This leads to the conclusion that any two nonadjacent Rindler-convex regions can certainly be separated by a very thin Rindler-convex region, thus any two non-adjacent Rindler-convex regions in a null vacuum as defined above are not entangled.

Imagine two adjacent Rindler-convex regions $a$ and $b$ in a null vacuum, with quantum entanglement between them \cite{Laflamme:1987ec,Czech:2012be,Bousso:2012mh}. If we `remove' a thin interval from $a$ at their interface so that they can be separated by a plane, the quantum entanglement between the new smaller subregion $a$ and $b$ vanishes completely. A plausible physical interpretation of this could be that all entanglement is located near the interface of the subregions. As we have entanglement entropy proportional to its surface area, we would expect a uniform distribution of `gravitational entanglement' near the surface of a region. 

On the other hand, if the spacetime background is not a null vacuum, the Rindler-convex region cannot be arbitrarily thin. In this case, there might exist non-zero gravitational quantum entanglement between two distant Rindler-convex regions. Mathematically, this situation might result from the existence of matter fields or black holes\footnote{Especially when a black hole exists, every Rindler-convex surface must wrap the event horizon. That makes the Rindler-convex condition so strict that only few regions could be geometrically separable.} that deform the vacuum spacetime geometry. We can then conclude that gravitational quantum `entangling pairs' might not be located solely near their interface, when matter or black holes are present. 

\subsection{Introducing time cutoff}
\noindent
We have previously argued that introducing a time cutoff into the gravitational system `cuts out' the long-distance correlation by relaxing the Rindler-convexity condition in gravitational systems \cite{Ju:2023bjl}. {This decrease in correlation includes both quantum entanglement and classical correlation between the two regions. In this context, we are specifically interested in the former and this decrease in quantum entanglement is reflected in two aspects.} With the introduction of a time cutoff: I. the Rindler-convex hull of $ab$ decreases in size, and II. region $E$ could become more `wiggly'. Both of these aspects stem from the relaxation of the Rindler-convexity condition due to the time cutoff, which in turn suggests that separating $a$ and $b$ becomes easier.

{Theoretical analysis allows us to determine when this time cutoff would completely eliminate all quantum entanglement for two subregions {which originally had} quantum entanglement in the general case. The critical thickness of the ``time layer" when the two regions become separable can then naturally reveal the shortest distance of quantum entanglement between them.} 
Generally, as we continue to increase the cutoff to make the remaining `time layer' manifold thinner, the quantum entanglement between $a$ and $b$ would be the first to vanish, followed by the mutual information. In quantum information theory, quantum entanglement contributes to mutual information, meaning that mutual information cannot vanish before the quantum entanglement vanishes. This is evident in the procedure where the `bridge' of the Rindler-convex hull connecting $a$ and $b$, with its cross-section giving an upper bound of the mutual information, becomes thinner, requiring a smaller region $E$ to separate the bridge. Ultimately, when the bridge breaks up, i.e., the Rindler-convex hull is disconnected, and the mutual information vanishes as well. We will provide a simple example to illustrate this in the next section. 

\subsection{An observer concordance explanation}

Following the holographic observer concordance formalism proposed in \cite{Ju:2023bjl}, we also provide an observer interpretation for the understanding of the separation theorem here. 

The separation theorem can be explained through observer physics as follows. {We will show that with the existence of a region $E$ as required in the separation theorem, we could always define a set of physical observers who could not observe $a$ and $b$ at the same time. Thus the existence of this set of well-defined observers, who could not see the quantum entanglement between $a$ and $b$, indicates that $a$ and $b$ should not have any quantum entanglement in this quantum state from the observer concordance proposed in \cite{Ju:2023bjl}. The detail is as follows.}  

With $ab$'s Rindler-convex hull divided by $E$ into two parts (blue and green in Figure \ref{separate}), we can also choose two sets of $E$'s Rindler observers\footnote{When we say $E$'s Rindler observers, we refer to the observers who treat region $E$'s boundary as their horizon.}, with each set of observers constrained in either one part of $ab$'s Rindler-convex hull on the Cauchy slice (blue or green). 
Each set of observers could observe all the entanglement structures inside regions $a$ or $b$, respectively, as these observers can persist for an arbitrarily long time to observe inside regions $a\subset A$ and $b\subset B$.
Due to the normal definition of `separate', the two sets of observers inside regions $A$ and $B$ never did and never will have any causal connection with each other because each set is behind the Rindler-horizon of the other. From these observers' perspectives, gravitational degrees of freedom inside regions $a$ and $b$ did not and will not have any chance to entangle with each other. Furthermore, these two sets of observers will not even be aware of whether they exist in the same spacetime manifold or not.

\subsection{Separation theorem in holography}

\noindent
One can observe that the separation theorem unveils nontrivial entanglement structures within gravitational theory. Naturally, it is important to find its corresponding boundary theory physics to gain further insights.
The AdS/CFT correspondence demonstrates that the theory of Einstein gravity in anti-de Sitter spacetime, as $G_N\to 0$, is dual to a conformal field theory in the large $N$ limit, residing on the conformal boundary of AdS spacetime. Since our preceding argument is general and not tied to any specific spacetime background, it should also hold in the AdS spacetime. 
Conveniently, subregion-subregion duality and GRW subregion duality establish a correspondence between the density matrices in these two theories.
Below, utilizing these two formalisms, we will show that the separation condition in the bulk naturally gives rise to boundary physics that emphasizes the ``entanglement builds geometry" idea.

Entanglement wedge reconstruction \cite{Czech:2012bh,Wall:2012uf,Headrick:2014cta,Espindola:2018ozt,Saraswat:2020zzf,dong2016reconstruction,Harlow:2018fse,Bousso:2012sj,Leutheusser:2022bgi} states that the quantum state $\rho$ on a boundary subregion is dual to the bulk gravity subsystem within its corresponding entanglement wedge, while the GRW subregion duality associates the bulk GRW to a different $\tilde \rho$ state of the same boundary spatial subregion with certain long-range entanglement removed, as reviewed in Section 2. 
The separation theorem specifies the separability of quantum states in {two} Rindler-convex subregions in the bulk, and though EWs are in general not Rindler convex, however, according to the entanglement wedge reconstruction EWs are well-defined subregions whose degrees of freedom are fully encoded within the subregions, thus we could change the smaller Rindler convex subregion $a$ in the bulk separation theorem to EWs. {Therefore we have a separable quantum state of two quantum states in the two EWs $EW(\mathbf{A})$ and $EW(\mathbf{B})$\footnote{ {Note that we always use bold font (\eg, $\mathbf{A}$) to represent boundary subregion on a Cauchy slice and normal font ($A$) denote the bulk subregion. In addition, as we are always working on a single Cauchy slice, we only consider the spatial part of the entanglement wedge, \ie, $EW(\mathbf{A})$ denotes the intersection of $A$'s entanglement wedge and the Cauchy slice $\Sigma$.}}. An immediate question is what is the corresponding boundary separable quantum state and what this separation theorem implies at the boundary. }

As separability of two boundary subsystems is a very strong statement, crucially, one issue we must clarify is: which state are we actually considering for separability? From the formula of zero squashed entanglement, we trace out the subsystem $E$ in (\ref{bimarkov}) to obtain $\rho_{AB}$ and test the separability of $A$ and $B$. Note that the degrees of freedom of $E$ could be traced out as long as the whole Hilbert space could be factorized into the Hilbert spaces of $E$ and its complement. Due to the nonlocality of gravity, this could not always be done \cite{Donnelly:2017jcd, Giddings:2018umg, Donnelly:2018nbv, Giddings:2019hjc}. However, for GRWs or their compliments, we could do this because a consistent subalgebra \cite{Leutheusser:2022bgi,Ju:2023bjl} for these systems could be well-defined\footnote{The algebra mentioned here is a Type III von Neumann algebra in nature, which leads to an obstacle in factorizing Hilbert spaces into tensor products. Howver, in principle, these subalgebras could be modified into Type II von Neumann algebras as has been done in \cite{Witten:2021unn, Chandrasekaran:2022cip, Chandrasekaran:2022eqq,Witten:2023qsv} recently. }. 
On the gravity side, rigorously speaking, tracing out E means that we are assessing the separability of $A$ and $B$ within the GRW outside $E$, i.e. the reduced state. However, in quantum information theory, as long as the degrees of freedom in $E$ ($\mathcal{H}_E$) are independent of the degrees of freedom in $AB$ ($\mathcal{H}_{AB}$), tracing out $E$ will not modify the entanglement structure between $A$ and $B$ as this is only determined by $\rho_{AB}$. In gravitational theory, the concept that ``observers define consistent subsystems" indeed leads to the fact that degrees of freedom in $E$ can be partitioned from the rest. Thus, on the gravity side, we could conclude that we are studying the separability of $A$ and $B$ in the original state, where we define $\mathcal{H}_{A(B)}$ as $\Tr_E\mathcal{H}_{AE(BE)}$. 

In the dual boundary field theory side, things can become subtle because degrees of freedom in the bulk that are dual to the boundary subregions are non-local. This is evident from the fact that the entanglement wedge does not adhere to an addition rule, i.e., \begin{equation} EW(\mathbf{A}\cup \mathbf{B})\neq EW(\mathbf{A})\cup EW(\mathbf{B}).\end{equation} Consequently, a region $E\subset EW(\mathbf{A}\cup \mathbf{B})$ in the bulk might encode information in $\mathbf{A}\cup \mathbf{B}$ on the boundary, even if $E$ has no intersection with $EW(\mathbf{A})$ or $EW(\mathbf{B})$. Therefore, the bulk region $E$ might be understood as being dual to some IR boundary degrees of freedom \cite{Balasubramanian:2013lsa,Balasubramanian_2012} which carry long-range entanglement structures \cite{Gioia_2022}, and tracing them out will actually modify the entanglement structure between $A$ and $B$ on the asymptotic boundary. On the dual field theory side, we assert that $\tilde \rho$ is a state derived from $\rho$ with the elimination of certain long-range entanglement, corresponding to a GRW. From the calculation in (\ref{bimarkov}), it becomes apparent that we are examining the separability of $\mathbf{A}$ and $\mathbf{B}$ in the $\tilde \rho$ state. {This is because the quantum state of $\mathbf{A}\mathbf{B}$ has changed with some of their long-range entanglement removed and this separability in $\tilde{\rho}$ differs from the separability in the state $\rho$ because some degrees of freedom have already been traced out} \footnote{In the case which $\mathbf{AB}$'s RT surface is disconnected, tracing out $E$ will not modify the entanglement structure between $\mathbf{A}$ and $\mathbf{B}$, and we will end up with the conclusion that the squashed entanglement between $\mathbf{A}$ and $\mathbf{B}$ indeed vanishes as we find a region $E$ to separate their entanglement wedges. However, this is a trivial result, as their mutual information vanishes due to the disconnected RT surface.}.

Consequently, in the dual boundary field theory side, the separation theorem tests the separability of subsystem $\mathbf{a}$ and $\mathbf{b}$ in the $\tilde \rho$ state. As we demand that region $E$ cannot have intersections with $EW(\mathbf{a})$ or $EW(\mathbf{b})$, the GRW of $E$ must include $EW(\mathbf{a})$ and $EW(\mathbf{b})$, making $\rho_\mathbf{a}$ and $\rho_\mathbf{b}$ density matrices reduced from $\tilde \rho$. Thus, the geometrical separation between $EW(\mathbf{a})$ and $EW(\mathbf{b})$ corresponds to the fact that one can construct a GRW in the bulk (dual to a time band on the boundary), eliminating the quantum entanglement between $\mathbf{a}$ and $\mathbf{b}$ while preserving $\rho_\mathbf{a}$ and $\rho_\mathbf{b}$. This statement is summarized in Figure \ref{EsqHolo}. {This is not in contradiction to the result on the gravitational side. We could view the subsystem $\mathbf{A}$ or $\mathbf{B}$ in the field theory side as the original subsystem with small momentum modes removed and this could in turn explain the gravitational nonlocality as storing the long-range entanglement structure between subregions.}

\begin{figure}[H]
    \centering    \includegraphics[width=0.6\textwidth]{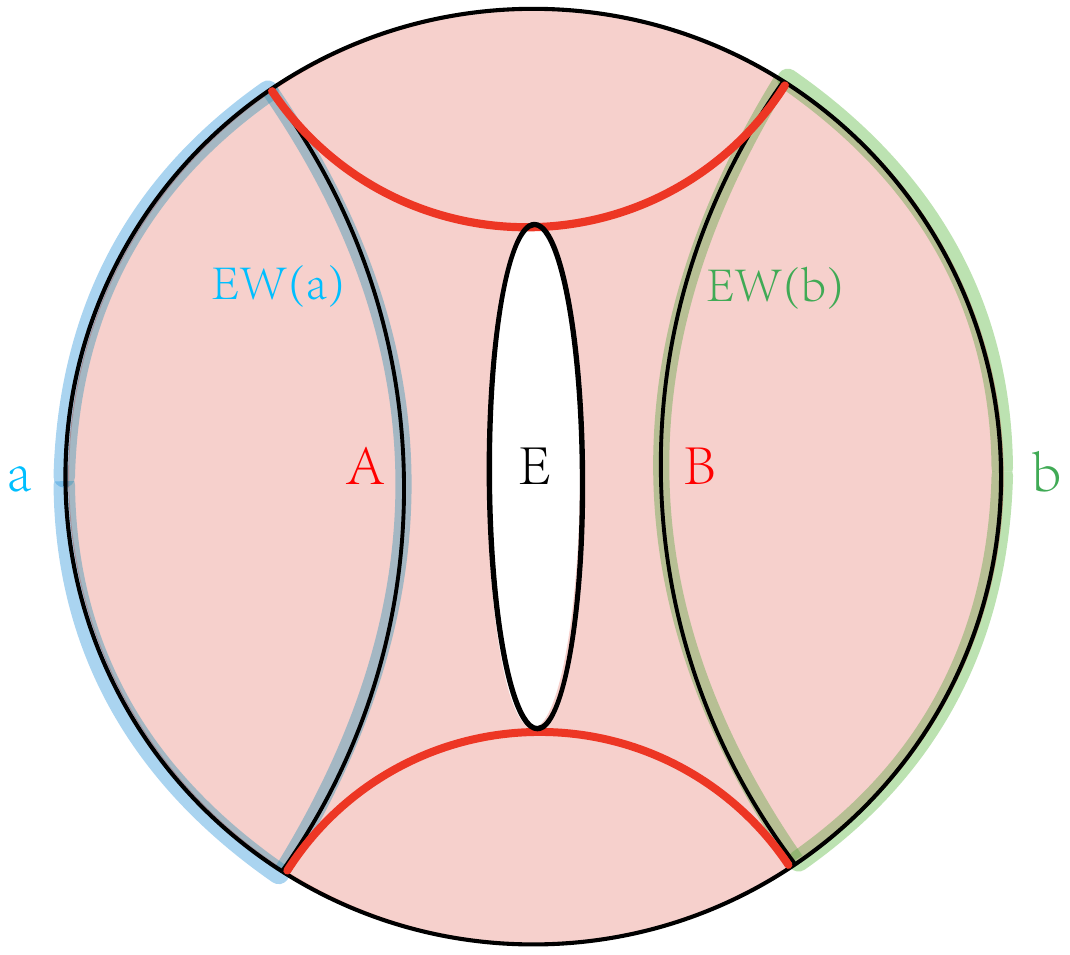} 
    \caption{Separation theorem in holography. Bulk region $E$, $AE$, $BE$ are Rindler-convex regions. $EW(\mathbf{a})\subset A$ and $EW(\mathbf{b})\subset B$ are entanglement wedges of boundary subreions $\mathbf{a}$ and $\mathbf{b}$. Tracing out region $E$ in the bulk corresponds to cutting off the white-shaded region, and the dual state is $\tilde \rho$ which corresponds to the red-shaded region outside region $E$ (red curves are RT surfaces of $\mathbf{ab}$ before $E$ is cut off.). Through evaluation \cite{Ju:2024xcn}, one can find that the conditional mutual information between $\mathbf{a}$ and $\mathbf{b}$ with the region between them being the condition, indeed vanishes in $\tilde \rho$.} \label{EsqHolo}
\end{figure}

\subsubsection{Geometric conditions for boundary separable states in $\tilde{\rho}$}

As we have shown that the bulk separation theorem implies boundary separable states in the $\tilde{\rho}$ state, the next question is the reverse problem: what are the geometric conditions under which boundary subregions have separable dual bulk entanglement wedges. 
Due to the complicate geometrical structure of extremal surfaces in asymptotical hyperbolic space, finding the necessary and sufficient condition would be a difficult task. We now give a sufficient condition, which states that for two boundary subregions $\mathbf{A}$ and $\mathbf{B}$ which can be separated by a spherical shell in the sense that $\mathbf{A}$ and $\mathbf{B}$ are in different sides of the shell, $EW(\mathbf{A})$ and $EW(\mathbf{B})$ are separable subsystems in the bulk.

As shown in Figure \ref{holographicalsep}, this proof is quite intuitive. Since $\mathbf{A}$ and $\mathbf{B}$ on the boundary could be separated by a spherical shell (the blue circle at the boundary), the entire boundary is divided by the shell into two parts: the outside part $\mathbf{M}\supset \mathbf{A}$, and its complement, the inside part $\mathbf{N}\supset \mathbf{B}$. Then, $P=\partial EW(\mathbf{M})=\partial EW(\mathbf{N})$ is the plane (with vanishing extrinsic curvature tensor in the bulk, the blue hemisphere) in the bulk homologous to the spherical {disk} on the boundary. Due to the nesting rule, we have $EW(\mathbf{M})\supset EW(\mathbf{A})$ and $EW(\mathbf{N})\supset EW(\mathbf{B})$. Therefore, {we have proved that $E$ is a Rindler-convex spherical shell that separates $EW(\mathbf{A})$ and $EW(\mathbf{B})$.}

\begin{figure}[H]
    \centering    \includegraphics[width=0.66\textwidth]{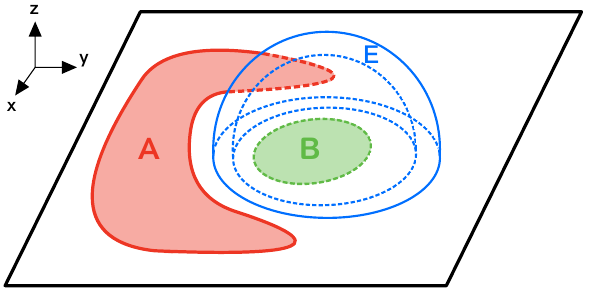} 
    \caption{The separability of $EW(\mathbf{A})$ and $EW(\mathbf{B})$ {by a thin spherical shell $E$ in the bulk} can be illustrated as follows: $\mathbf{A}$ and $\mathbf{B}$ are two boundary subregions shaded in red and green separately, and the blue circle shell on the boundary separates them. One could prove that {the spherical shell $E$} in the bulk homologous to the blue disk must separate $EW(\mathbf{A})$ from $EW(\mathbf{B})$ from the nesting rule.} \label{holographicalsep}
\end{figure}

On the boundary, if regions $\mathbf{A}$ and $\mathbf{B}$ can be ``separated by a spherical shell", it means that we can construct two sets of boundary observers who can observe $\rho_\mathbf{A}$ and $\rho_\mathbf{B}$ respectively without being causally connected with each other in a finite time interval. Then, the time cutoff introduced on the boundary could be seen as the boundary of the causal domain of these observers. {As shown in Figrue 3,} one can observe that this time cutoff indeed ``cuts" all the entanglement between $\mathbf{A}$ and $\mathbf{B}$, which is why we assert that the separation theorem naturally emerges in the context of boundary physics. Observer interpretation also inspires us to understand the theorem above from the perspective of ``holographic observer concordance" framework that 
\textit{the degrees of freedom on the boundary that can be separated by two causally disconnected boundary observers must be dual to the two entanglement wedges that can be separated by two sets of observers in the bulk.}

There are a few more noteworthy points to mention. Firstly, the condition for $EW(\mathbf{A})$ and $EW(\mathbf{B})$ to be separable, as described above, is not a necessary condition in higher-dimensional cases such as $AdS_{d+1}/CFT_d$ with $d>2$. This is because two regions that can be separated by a Rindler-convex region do not necessarily need to be separated by a plane in high dimensional space. Therefore, sometimes it becomes imperative to analyze the specific geometric structure of $EW(\mathbf{A})$ and $EW(\mathbf{B})$ in order to determine whether it is possible to construct a $\tilde\rho$ state where $\mathbf{A}$ and $\mathbf{B}$ are separable in holographic theory.
Secondly, when designing the time cutoff on the boundary to eliminate the entanglement between $\mathbf{A}$ and $\mathbf{B}$, it is important to note that {as the intersection of the edge of a GRW and the asymptotic boundary, the shape of the time cutoff surface cannot be arbitrary}\footnote{{It has been demonstrated in \cite{Hubeny_2014} that the condition of the so-called ``strip wedge" coinciding with the ``rim wedge" restricts the shape of the time band. We will not go through details in this paper.}}.
Thirdly, it is worth noting that the separation theorem holds in general holographic theories, not limited to the AdS vacuum where GRW subregion duality applies. This insight opens up possibilities for extending GRW subregion duality to non-vacuum scenarios and sheds light on the location of entanglement structures. Specifically speaking, in general holographic theories, long-range entanglement structures between regions on the boundary at certain distances are located within the convex IR region which separates their entanglement wedges within the bulk.

\section{Gravitational multipartite entanglement}
\noindent
In this section, we generalize the separable property of the two-partite case to the multipartite case. We first provide an overview of three types of separability of multipartite systems commonly studied in quantum information theory. By employing the measure of multipartite squashed entanglement {and the conditional entanglement of multipartite information}, we are able to identify the distinct geometric configurations of gravitational subregions that correspond to these different types of separability in multipartite states.

Generalizing bipartite entanglement into multipartite entanglement is not a trivial task because the multipartite entanglement structure is much richer than the bipartite case. Although we have already proved that the state of any two non-adjacent Rindler-convex regions is separable in null vacuum spacetimes, this result does not necessarily hold in the multipartite case, i.e. non-adjacent multi-Rindler-convex regions might not be separable as will be demonstrated later. In the first two subsections, we will focus on tripartite entanglement by analyzing three different classes of geometrical tripartite subregions and then generalize our findings to other multipartite cases in the rest of this section. We aim to provide an elucidation that ``geometrically separable" corresponds to ``quantum separable states" for multipartite states in both gravitational systems and the holographic dual field theories. 

\subsection{Separability for mutipartite quantum system}
\noindent Before looking for dedicated multipartite entanglement measures, a natural approach was to employ bipartite quantities to study multipartite entanglement \cite{brassard2001multi}. The separability in bipartite quantum systems could also be generalized to multipartite systems. In the context of general multipartite entanglement structures, there exist three classes of separability.


 {\textbf{1. $m$-party total semi-separability} \cite{brassard2001multi}. The $n$-partite state $\rho_{A_1...A_n}$ is divided into $m$ subsystems $B_1...B_m=A_1...A_n$, where $m\leq n$. The state is $m$-party total semi-separable ($m-TSS$) if and only if:
\begin{equation}\label{FSP}
   \rho_{A_1 \ldots A_n}=\sum_{i} p_i \rho_{B_1}^i \otimes \ldots \otimes \rho_{B_m}^i.
\end{equation}
We denote this by $m-TSS(B_{1};B_{2};...;B_{m})$.}

 {\textbf{2. $k$-separability}
A state of $n$-partite subsystems is defined to be semiseparable \cite{10.5555/2011326} if it has $2-TSS(A_{i_1}A_{i_2}...A_{i_{(n-1)}};A_{i_n})$ for any choices of $i_1,...,i_n$. This means that the state $\varrho_{A_1 \ldots A_n}$ is $2-TSS$ under all 1- $(n-1)$ partitions.
Furthermore, the concept of semi-separability has been refined to include all 2-TSS under arbitrary bipartite $i- (n-i), i=1,...,n-1$ divisions, rather than considering only the $1- (n-1)$ divisions \cite{Yu_2004}. Following this line of reasoning, $k$-separable ($k\leq n$) states are defined as those satisfying all $k$-TSS criteria after the $n$-partite state being arbitrarily partitioned into $k$ parts \cite{Horodecki_2009,Hong_2021,Ananth_2015}. Clearly, when $m=k$, the conditions of $m$-TSS, is equivalent to $k$-separable.}

 {\textbf{3. m-party partial semi-separability (m-PSS) \cite{brassard2001multi}.} 
Given an $n$-partite state, one can perform a partial trace over $n-m$ parties, with $m$ parties left. The partial semi-separability ($m$-PSS) condition checks the separability of the remaining $m$ subsystems. We denote this condition as $m-PSS(A_{1};A_{2};...;A_{m})$ if the remaining subsystems are $m$-partite total semi-separable. For example, given a tripartite state $\rho_{ABC}$, $2-PSS(A:B)$ indicates that $\rho_{AB}=\sum_ip_i\rho_{A}^i\otimes\rho_{B}^i$.}

 {To summarize briefly, n-TSS indicates that quantum entanglement vanishes completely among these $n$ subsystems. $k$-separability means that entanglement which includes the number of partite less than $k$ vanishes completely, and m-PSS is equivalent to m-TSS, except that it is with respect to a subsystem after a partial trace procedure.}

The different types of separability discussed are not equivalent, both in their definitions and practical implications. Among the three classes of separability, it is evident that full $m$-partite separability is stronger than $k$-separability with $k<m$, which, in turn, is stronger than $k-TSS$ and $k-PSS$. In the context of gravitational tripartite entanglement, we will demonstrate that there are precisely three kinds of geometrical separating conditions corresponding to the three separating conditions mentioned above. Moreover, stronger conditions for separability correspond to stricter geometrical conditions, as we will demonstrate below.

{\subsection{Geometrical conditions for {$2$-TSS and $2$-PSS} structures}}
\noindent {Our ultimate goal is to identify all the geometrical structures of gravitational subregions whose states correspond to each of the separabilities described above.} Let us begin with the simplest PSS case. Consider three Rindler-convex subregions, denoted as $A, B, C$, within a gravitational system. The procedure of tracing out one of the subsystems, say $C$, is to disregard all systems except $A$ and $B$. We can then analyze the separability of $A$ and $B$ using the separation theorem. As depicted in figure \ref{tri12}(a), each pair of Rindler-convex regions is geometrically separable (in null vacuum). This corresponds to the tripartite state $\rho_{ABC}$ satisfying $PSS(A,B)$, $PSS(B,C)$, and $PSS(C,A)$.

\begin{figure}[H]
    \centering
    \includegraphics[width=0.6\textwidth]{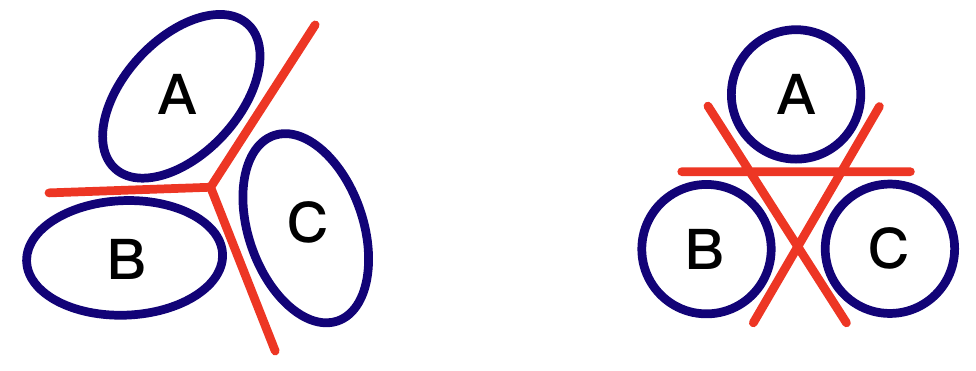} 
    \caption{In the case of three Rindler-convex regions, we can identify two typical classes of geometrically separated regions. On the left side, every pair of regions can be separated by a Rindler-convex region, but region $A$ and the combined region $BC$ cannot be separated. This corresponds to a state that satisfies the condition of being totally 2-PSS. On the right side, each region can be separated from the remaining regions by a Rindler-convex region. This type of state corresponds to a semiseparable state of the multipartite subsystems.}\label{tri12} 
\end{figure}

Generally speaking, a state of tripartite subsystems that satisfies the condition of being totally PSS could still exhibit tripartite entanglement. One famous example of such a state is the Greenberger-Horne-Zeilinger (GHZ) state
\cite{Bouwmeester_1999,kafatos2013bell,bengtsson2016brief}
\begin{equation}
    \left|G H Z\right\rangle=\frac{1}{\sqrt{2}}(|000\rangle+|111\rangle).
\end{equation}
One can check that the remaining state is separable if we trace out any subsystem from the GHZ state. Moreover, GHZ class is dense in the Hilbert space of three qubits \cite{Kim_2012}, therefore, if we pick a state randomly in Haar measure, (as we did in Page theorem argument in \cite{Ju:2023bjl}), the state must belong to the GHZ class. As a GHZ state is maximally entangled with respect to every bipartition ($A|BC$, $B|AC$, $C|AB$), it implies that the tripartite entanglement in gravitational d.o.f, if it exists, also tends to reach the maximum, like the result of Page theorem \cite{Page:1993df} in bipartite gravitational entanglement. 

Now let us analyze the $2$-TSS case in the gravitational system. First, we need to consider the geometric location of the d.o.f of $\rho_{AB}$. Although we can construct Rindler observers outside regions $A$ and $B$ individually, it is not possible to construct Rindler observers for the combined region $A\cup B$. Therefore, it is not appropriate to exclusively attribute $\rho_{AB}$ to the d.o.f within region $A\cup B$ for the sake of rigor. However, similar to the approach taken in the previous section, we can overcome this issue by making two reasonable assumptions which could be proved valid in holographic theory:

\textit{1. $\rho_{AB}$ could be reduced from $\rho_M$ with $M\supset (A\cup B)$ being Rindler convex.} This assumption implies that, as we construct the Rindler observers of $M$, we expect their ignorance to encompass all the d.o.fs of $\rho_{AB}$.

\textit{2. Given any single Rindler-convex region $N$ with $N\cap (A\cup B)=\emptyset$, we anticipate that none of the degrees of freedom within $A\cup B$ are inside $N$.} This assumption suggests that the degrees of freedom of a Rindler-convex region could be partitioned from its complement by Rindler observers.

 {It is worth noting that in a large class of spacetimes where a Rindler-convex region is the spatial part of a generalized entanglement wedge (Theorem 2.3), the above assumptions are naturally satisfied: any generalized entanglement wedge contains all d.o.f.s of smaller entanglement wedges, and two non-overlapping entanglement wedges correspond to non-overlapping d.o.f.s. This fact serves as an important example validating those assumptions.}

With these preparations, we can utilize the separation theorem to analyze the entanglement between subsystems $C$ and $AB$. If we can construct a Rindler-convex region $E$ that separates $C$ from $AB$'s Rindler convex hull $M$, there exists a region $P\supset C$ and $Q\supset M$ which renders the regions $PE$, $QE$, and $PQE$ Rindler-convex, and naturally satisfies \ref{SSA}. Using the same method as the separation theorem, one can conclude that $\rho_{ABC}$ is $TSS(C,AB)$. figure \ref{tri12} (b) depicts a state which is $TSS(A,BC)$, $TSS(B,AC)$, and $TSS(C,AB)$, i.e., semiseparable or k-separable with $k=2$.

One can discern that the geometrical condition of semiseparability in figure \ref{tri12} (b) is stronger than the geometrical condition of total 2-PSS in figure \ref{tri12} (a). This corresponds to the monogamy of bipartite squashed entanglement
\begin{equation}
    E_{\mathrm{sq}}(A;BC)\geq E_{\mathrm{sq}}(A;B)+E_{\mathrm{sq}}(A;C),
\end{equation}
where $E_{\mathrm{sq}}(A;BC)$ remains positive even though $E_{\mathrm{sq}}(A;B)$ and $E_{\mathrm{sq}}(A;C)$ becomes zero.

Is it sufficient to analyze tripartite entanglement solely through the concept of bipartite? Or equivalently, is semiseparable synonymous with fully separable? The answer is affirmative only if $\rho_{ABC}$ is a pure state; otherwise, one can construct counterexamples that are semiseparable but still having multipartite entanglement. An explicit example was presented in \cite{Bennett_1999}. Furthermore, any combination of 2-TSS and PSS cannot result in a fully separable (3-TSS) state for the mixed state $\rho_{ABC}$ \cite{brassard2001multi}. In a gravitational system, we must treat $\rho_{ABC}$ as a mixed state because we only regard the entire Cauchy slice as a pure state. Therefore, we must propose another geometrical structure that is dual to the fully separable state $\rho_{ABC}$.
\vspace{0.4cm}
\subsection{Multipartite squashed entanglement and {conditional entanglement of multipartite information}}
\noindent 
{To establish a connection between geometrical structures and multipartite entanglement structures and to generalize the separation theorem to the multipartite case, we have opted to use multipartite squashed entanglement and the conditional entanglement of multipartite information (CEMI) as the measure of multipartite entanglement.}

Multipartite squashed entanglement, which first appeared in \cite{Yang_2009,Avis_2008}, has at least two versions, $E_{\mathrm{sq}}$ and $\widetilde{E}_{\mathrm{sq}}$, as defined in \cite{Wilde_2016}
\begin{equation}\label{multiEsq}
    \begin{aligned}
& E_{\mathrm{sq}}\left(A_1 ; \cdots ; A_m\right)\equiv \frac{1}{2} \inf\left\{I\left(A_1 ; \cdots ; A_m \mid E\right)_\rho: \operatorname{Tr}_E\left\{\rho_{A_1 \cdots A_m E}\right\}=\rho_{A_1 \cdots A_m}\right\}, \\
& \widetilde{E}_{\mathrm{sq}}\left(A_1 ; \cdots ; A_m\right)\equiv \frac{1}{2} \inf \left\{\widetilde{I}\left(A_1 ; \cdots ; A_m \mid E\right)_\rho: \operatorname{Tr}_E\left\{\rho_{A_1 \cdots A_m E}\right\}=\rho_{A_1 \cdots A_m}\right\},
\end{aligned}
\end{equation}
where $I\left(A_1 ; \cdots ; A_m \mid E\right)_\rho$ and $\widetilde{I}\left(A_1 ; \cdots ; A_m \mid E\right)_\rho$ is defined as
\begin{equation}
    \begin{aligned}
    I(A_1;\cdots;A_m \mid E)_\rho &=\sum_{i=1}^m S(A_i E) - (m-1)S(E) -  S(A_1 \cdots A_m E),\\
     \widetilde{I}(A_1 ; \cdots ; A_m \mid   
     E)_\rho&=\sum_{i=1}^m S(A_{[m] \backslash\{i\}} E) - S(E) - (m-1) S(A_1 \cdots A_m E).
    \end{aligned}
\end{equation}
One can verify that each of the definitions reduces to the bipartite squashed entanglement when $m=2$. {It has been established in \cite{Davis:2018ydj} that these two measures are, in fact, equal. An intriguing observation, as demonstrated in Appendix B, is that the geometric conditions for both of these measures to vanish are also proven to be equal. This coincidence strongly suggests a bidirectional correspondence between geometric structure and squashed entanglement, rather than a one-way derivation.} 
{Given the equivalence between these two measures, we will now refer to the squashed entanglement as the first one ($E_{\mathrm{sq}}$) unless stated otherwise.}

{The measure known as the conditional entanglement of multipartite information (CEMI) is defined as follows \cite{Yang_2008}}
\begin{equation}
E_I\left(A_1: \cdots: A_m\right)_\rho \equiv \frac{1}{2} \inf _{\rho_{A_1 A_1^{\prime} \cdots A_m A_m^{\prime}}} I\left(A_1 A_1^{\prime}: \cdots: A_m A_m^{\prime}\right)_\rho-I\left(A_1^{\prime}: \cdots: A_m^{\prime}\right)_\rho,
\end{equation}
{where similar to the multipartite squashed entanglement, the infimum is over all extensions $\rho_{A_1 A_1^{\prime} \cdots A_m A_m^{\prime}}$ of $\rho_{A_1 \cdots A_m}$, i.e.,}
\begin{equation}
    \rho_{A_1 \cdots A_m}=\operatorname{Tr}_{A_1^{\prime} \cdots A_m^{\prime}}\left\{\rho_{A_1 A_1^{\prime} \cdots A_m A_m^{\prime}}\right\}.
\end{equation}
{It is evident that the non-negativity of CEMI is guaranteed by strong subadditivity. Furthermore, It has been demonstrated in \cite{Wilde_2015} by Wilde, utilizing strong subadditivity, that CEMI serves as an upper bound for the multipartite squashed entanglement.}

Are the multipartite squashed entanglement {and CEMI suitable measures?} In \cite{Guo_2020}, Guo and Zhang proposed the following properties that a multipartite entanglement measure should satisfy: 1. it vanishes in a fully separable state; 2. it cannot increase under m-partite local operations and classical communication (LOCC); 3. it reduces to the bipartite case when $m=2$; 4. it obeys monogamy \cite{Koashi_2004}. These properties have already been proven for multipartite squashed entanglement \cite{Yang_2009,Avis_2008}, which has also the qualities of convexity, subadditivity, continuity, etc. {For CEMI, the properties mentioned above also hold, except for the fact that it is not known whether it is monogamous. Our primary concern is the faithfulness of these measures in determining whether a state is separable or not. Regarding multipartite squashed entanglement, this remains an open question to our knowledge \cite{Li_2014, Li_2018}\footnote{We would like to thank M.M. Wilde for pointing this out.}. However, as the vanishing condition of CEMI is stronger than the vanishing condition of $E_{sq}$, the faithfulness of CEMI has already been proven in \cite{Wilde_2015}, i.e., $E_I\left(A_1: \cdots: A_m\right)_\rho=0$ if and only if $\rho_{A_1 \cdots A_m}$ is a full m-partite separable state (\ref{FSP}). In summary, while $E_{sq}$ and CEMI are both good measures, they have not been proven to be perfect multipartite entanglement measures yet.}

\begin{figure}[H]
    \centering
    \includegraphics[width=0.95\textwidth]{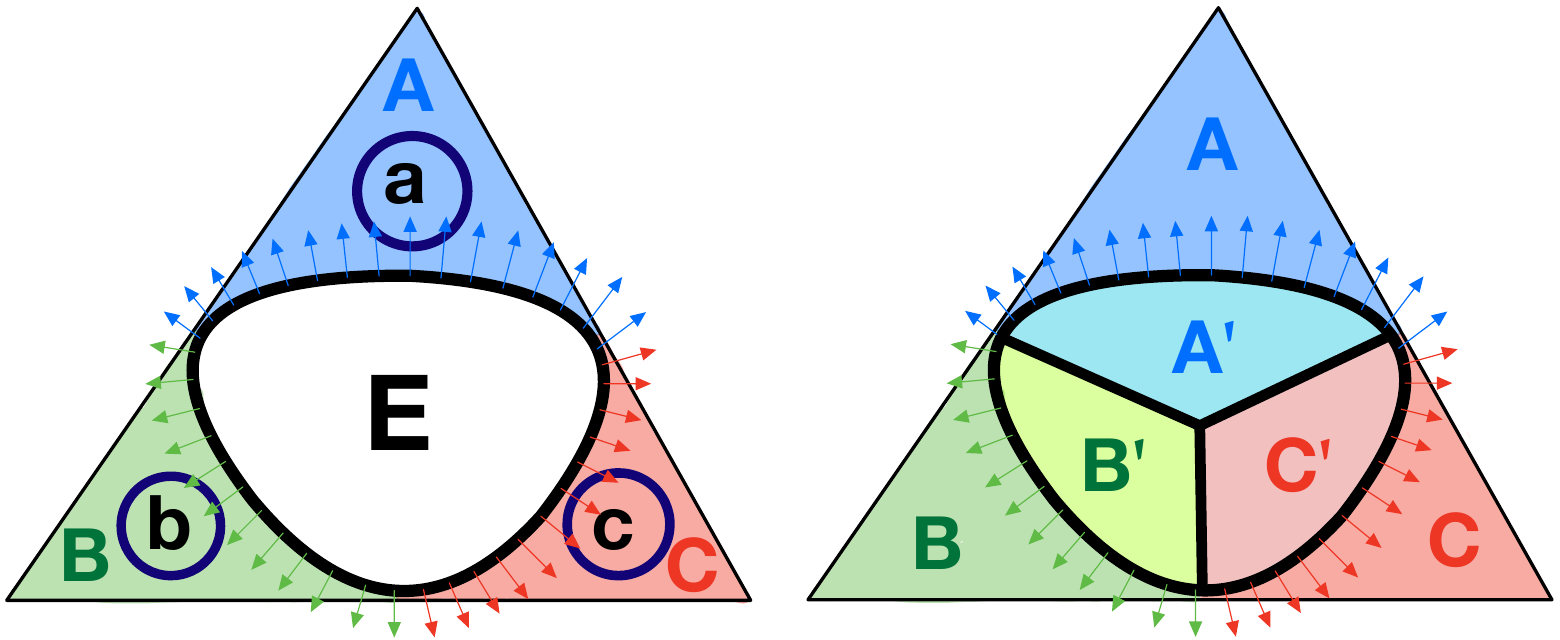} 
    \caption{{Geometric structures corresponding to vanishing multipartite squashed entanglement and vanishing CEMI. Left:} regions $E$, $AE$, $BE$, $CE$, and $ABCE$ are all Rindler-convex, and region $E$ separates Rindler-convex regions $a$, $b$, and $c$. We can divide $E$'s Rindler observers into three groups in the blue-shaded region A, the green-shaded region B, and the red-shaded region C, respectively. These three groups of observers are entirely causally disconnected from each other and can observe all the degrees of freedom within regions $a$, $b$, and $c$, respectively. {Right: in contrast to the left side, region $E$ is partitioned into three regions $A'$, $B'$, and $C'$. To ensure Rindler convexity for all of them, their interfaces must exhibit Rindler convexity on both sides, requiring that the background spacetime is a null vacuum. Since regions $A$, $B$, and $C$ are only adjacent to regions $A'$, $B'$, and $C'$, one can demonstrate, utilizing the Rindler convexity of regions $E$ and $ABCE$, that regions $AA'$, $BB'$, and $CC'$ are also Rindler convex.}}\label{TRI} 
\end{figure}

Having analyzed the quantum information aspect of the separable condition, let us now focus on its geometrical correspondence. Physically, we could learn from the observer interpretation proposed in the last section, and formulate a generalized hypothesis as follows: if there exist three groups of consistent observers, each capable of observing all the degrees of freedom in regions $A$, $B$, and $C$ respectively and causally disconnected with other groups, the state $\rho_{ABC}$ is fully 3-partite separable. Figure \ref{TRI} illustrates this case, where each group of observers is causally disconnected with other groups as they are behind others' horizon (part of the large triangle).

{In the dual boundary theory, as we reviewed in Sections 2 and 4, bulk Rindler observers that are causally disconnected from each other correspond to causally disconnected boundary observers. The boundary time cutoff then eliminates all quantum entanglement between these boundary subregions, implying the vanishing of multipartite entanglement for the corresponding bulk subregions\footnote{ {One can straightforwardly generalize Figure \ref{bcutoff} to the multipartite case in the same vein: given \(n\) regions \(A, B, C, \dots\) inside a time band such that no pair of these regions can be causally connected, one may take the union of the \(n-1\) gap regions between them as region \(E\). The \(n\)-partite conditional mutual information}  {then decomposes as the sum of CMIs—for example, when \(n=3\),
\[
I(A\!:\!B\!:\!C\mid E) = I(A\!:\!B\mid E) + I(AB\!:\!C\mid E).
\]
Due to the causal disconnection of \(A, B, C\), each CMI vanishes in this time band, resulting in the vanishing of the 3-CMI. In summary, the time cutoff also “cuts” long-range multipartite entanglement (as measured by \(n\)-CMI) in the same way it cuts bipartite entanglement.}} via the GRW subregion–subregion duality. Consequently, one expects a multipartite version of the separation theorem on the gravity side, mirroring the boundary physics in which the time cutoff destroys the entanglement structure.
}

As depicted in Figure \ref{TRI}, it is evident that there exists a Rindler convex region $E$ which effectively separates regions $a$, $b$, and $c$ from each other. This partition of the Rindler observer into three distinct sets enables each set to observe the degrees of freedom of $a$, $b$, or $c$, respectively. Subsequently, one can construct Rindler-convex regions $AE$, $BE$, $CE$, and $ABCE$. By evaluating the multipartite conditional mutual information between $A$, $B$ and $C$, and considering subsystem $E$ as the condition, we find that the tripartite squashed entanglement\footnote{{It can be proved that the multipartite squashed entanglement is non-negative via the non-negativity of the conditional mutual information: $I(A;B;C | E)_\rho=I(A;B|E)_\rho+I(AB;C | E)_\rho\geq0$.}} vanishes
\begin{equation}
    \begin{aligned}
    E_{\mathrm{sq}}\left(A;B;C\right)&=\inf \{I(A;B;C \mid E)_\rho\}\\&=S_{AE}+S_{BE}+S_{CE}-2S_{E}-S_{ABC}\\&=Area(AE)+Area(BE)+Area(CE)-2Area(E)-Area(ABC)\\&=0.
    \end{aligned}
\end{equation}
Here, we interpret $\rho_A$ as the density matrix reduced from $\rho_{AE}$ after taking the partial trace over subsystem $E$, as we emphasized in the bipartite case. Following the two assumptions above, it is straightforward to conclude that $\rho_A$ must contain all the degrees of freedom in $a$. Since $A$, $B$, $C$ have zero multipartite squashed entanglement {under the condition of $\rho_E$, it implies that $a$, $b$, $c$ also have zero multipartite squashed entanglement due to the monogamy of multipartite squashed entanglement.}

The distinction between the geometrical condition for vanishing multipartite squashed entanglement and semiseparability lies in the fact that multipartite squashed entanglement requires a single subsystem $E$ as the condition to separate all subsystems, while semiseparability demands $E_1,E_2,...,E_m$ to separate $A_i$ from $A_{[m] \backslash{i}}$ respectively. It is clear that the former is a stricter condition. Correspondingly, in the geometric case, semiseparability requires different Rindler-convex regions to separate region $A_i$ from others, while full m-partite separability demands a single Rindler-convex region to separate all $A_m$ regions. As depicted in figure \ref{TRI}, this geometrical condition is noticeably stricter than the semiseparable case.

{As the faithfulness of multipartite squashed entanglement has not been proven yet, we cannot ensure that the $m$-partite state in the gravitational system is a full $m$-partite separable state. That motivates us to analyze the CEMI between gravitational subregions. As shown in figure \ref{TRI}, one can partition convex region $E$ into three parts $A'$, $B'$, and $C'$, with each part only adjacent to region $A$, $B$ and $C$, respectively. If we demand that regions $A'$, $B'$ and $C'$ are all Rindler-convex, the background spacetime must be a null vacuum to allow the interfaces to be Rindler convex on both sides.}

{In a null vacuum spacetime, the multipartite mutual information of $m$ Rindler convex regions $A_1$,$A_2,..., A_m$ is twice the area of their interfaces if region $A_1A_2...A_m$ is also Rindler convex, similar to its bipartite counterpart \cite{Ju:2023bjl}}
\begin{equation}
\begin{aligned}
I(A_1:A_2:...:A_m)&=S_{A_1}+S_{A_2}+...+S_{A_m}-S_{A_1A_2...A_m}\\
    &=\sum_{i=1}^m Area(\partial A_i)-Area(\partial (A_1A_2...A_m))\\
    &=2\sum_{i>j} Area((\partial A_i) \cap (\partial A_j)).
\end{aligned}
\end{equation}
{With this preparation, we can calculate the CEMI of gravitational subregions $A$, $B$, and $C$ in figure \ref{TRI} as follows}
\begin{equation}\label{CEMI}
\begin{aligned}
    E_I(A:B:C)_\rho &\equiv \frac{1}{2} \inf _{\rho_{A A^{\prime}B B^{\prime}C C^{\prime}}} I(A A^{\prime}:B B^{\prime}:C C^{\prime})_\rho-I(A^{\prime}:B^{\prime}:C^{\prime})_\rho,\\
    &=2(A_{AA'|BB'}+A_{AA'|CC'}+ A_{BB'|CC'}-A_{A'|B'}-A_{A'|C'}- A_{B'|C'})\\
    &=0,
\end{aligned}
\end{equation}
{where $A_{M|N}$ represents the area of the interface between regions $M$ and $N$. CEMI vanishes due to the fact that adding subsystems $A$, $B$, and $C$ in multipartite mutual information $I(A^{\prime}:B^{\prime}:C^{\prime})$ does not modify the area of interfaces between them.}

{In the second line of equation (\ref{CEMI}), we require regions $A'$, $B'$, $C'$, $A'B'C'$, $AA'$, $BB'$, $CC'$, and $AA'BB'CC'$ to be Rindler-convex. Since region $A'B'C'$ is region $E$, in comparison to the multipartite squashed entanglement, we only need to additionally ensure that the background spacetime is a null vacuum so that $A'$, $B'$, and $C'$ can all be Rindler-convex regions. This corresponds to the fact in quantum information theory that CEMI serves as an upper bound of $E_{sq}$, making its vanishing condition stronger than that of $E_{sq}$. Due to the faithfulness of CEMI, we can conclude that the state of subsystems $A$, $B$, and $C$ is fully $3$-partite separable. This proof can be naturally generalized to $m$-partite cases.}

\textbf{Gravitational Multipartite Separation Theorem.} The quantum state $\rho_{A_1,...,A_m}$ of gravitational subregions $A_1,A_2,...,A_m$ {has vanishing multipartite squashed entanglement} if there exists a Rindler-convex region $E$ that simultaneously separates regions $A_1,...,A_m$ from each other, {and it is fully $m$-separable in the null vacuum.}

On the holographic field theory side, employing similar arguments as discussed in the previous section, one can derive the corresponding sufficient geometric condition for constructing a $\tilde\rho$ state with boundary subregions $A_1, A_2, ..., A_m$ as full m-partite separable while preserving $\rho_{A_1}...\rho_{A_m}$. 
Specifically, this condition entails the existence of $m-1$ spherical shells that separate regions $A_1, ..., A_m$ from each other, resulting in the same explanations for boundary time cutoff, etc.

By now, {we have established the multipartite separation theorem on the gravity side, corresponding to the physics of time cutoff {eliminating IR entanglement structure} on the boundary side in vacuum AdS. In a null vacuum, our hypothesis put forth by observer physics has been proven. Compared with the bipartite case, the only difference is that we require region $E$ to separate three regions simultaneously.  {On the other hand, in non-vacuum cases the separation theorem must be reexamined on both sides. In the bulk, CEMI may no longer be computable because \(A'\), \(B'\), and \(C'\) might not all be Rindler-convex simultaneously. Furthermore, the faithfulness of multipartite squashed entanglement remains an open question, casting doubt on the separability of gravitational multipartite states. On the boundary side, GRW subregion duality can fail, and causally disconnected regions may nonetheless exhibit nonzero conditional mutual information when evaluated via the RT formula. These issues render the physical interpretation of a boundary time cutoff unclear in non-vacuum settings.
}}

With the geometric condition corresponding to the fully m-separable state {in a null vacuum} established, all geometric conditions respectively corresponding to $k-TSS$, $k-PSS$ and $k-separable$ for any $k$ can be easily identified because their definitions are based on full m-separable, as we have already demonstrated. Given the fact that `simultaneously separating' is a stricter geometric condition than `respectively separating', one can easily show that the geometric condition corresponding to the stronger separability condition of quantum states is also stronger. This gives us a clearer implication on the correspondence of separation itself, i.e., the slogan we summarized in the last section (\ref{slogan}).

\subsection{Probing quantum entanglement}

\noindent \textbf{Locality of gravitational entanglement.} 
As we previously discussed, in the case of null vacuum, bipartite quantum entanglement between subregions is local and vanishes immediately when two non-adjacent Rindler-convex regions are considered. However, in the context of multipartite entanglement, the situation is different. The global tripartite entanglement may not vanish until we physically separate the three regions from each other by a finite distance to allow for the existence for Rindler-convex region $E$. This observation indicates that multipartite gravitational entanglement can carry quantum information over long distances. Moreover, the distance over which the entanglement persists increases with the number of parties involved. In other words, the more parties there are, the longer the separation distance is needed to fully disentangle them.  This finding aligns with the difficulty of simultaneously separating a large number of subregions using a single Rindler-convex region, as it becomes increasingly challenging as the number of parties grows.

When a time cutoff is introduced, multipartite entanglement is affected in a similar way to the case of bipartite entanglement. In the extreme case where no observers can be causally connected with any of the multiple regions, making it impossible to detect the entanglement structures between them, the combination of the entire set of regions becomes Rindler-convex. In this scenario, the entropy of the combined system is proportional to its surface area. As a result, the multipartite mutual information $I$ vanishes
\begin{equation}
    I(A_1:A_2:...:A_n)=S_1+S_2+...+S_n-S_{12...n}=0.
\end{equation}
The entire state is a simple product state without any quantum or classical multipartite correlations \cite{Avis_2008}
\begin{equation}\label{SSP}
    \rho_{A_1 \ldots A_m}=\rho_1 \otimes \ldots \otimes \rho_m.
\end{equation}
What we are primarily interested in is the quantum entanglement structure rather than classical correlations. Here, we can investigate the quantum entanglement structure by analyzing when the time cutoff removes all quantum entanglement. As observers can detect the entanglement structures within the ``time layer", which refers to the spacetime region between the cutoff, the information regarding the quantum entanglement structure is indeed embedded in the geometric shape of the minimal time cutoff required to remove all quantum entanglement.

Here, we present a simple example to investigate the tripartite entanglement critical distance between three spherical regions in a 2+1-dimensional null vacuum geometry. To facilitate our analysis, we introduce a ``planar" time cutoff in the spacetime, which leaves a ``time layer" between the regions. In figure \ref{trisep}, it is evident that separating region $abc$ is easier than where there is no cutoff, as demonstrated. Specifically, if $a=1.1$, the thickness of the time layer must satisfy $\Delta t \lesssim 10.87$ to render $abc$ semiseparable, and $\Delta t \lesssim 3.88$ to eliminate the tripartite entanglement completely. 
\begin{figure}[H] 
    \centering
    \includegraphics[width=0.35\textwidth]{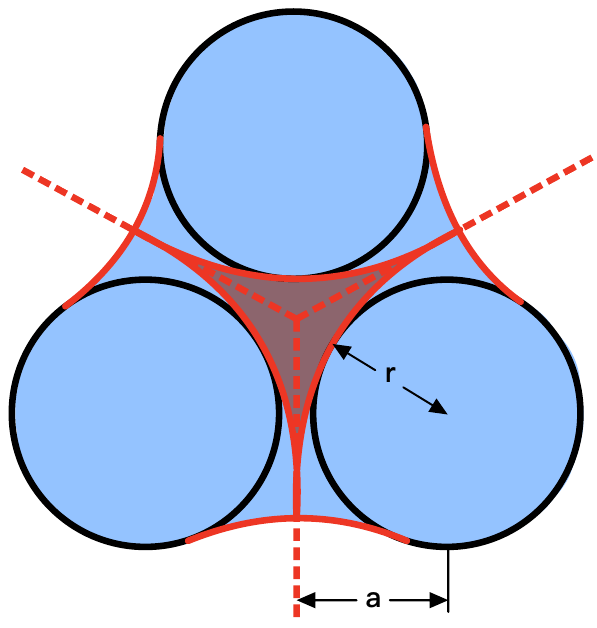} 
    \caption{We observe three black disk regions $abc$ with a radius of $r=1$, positioned at a distance of $2a$ from each other. The introduction of a planar time cutoff is depicted by the red curves, representing the largest lightspheres associated with each region. The Rindler-convex hull of $abc$ is highlighted in blue shading, while the brown shading represents region $E$, which acts as a separator for the Rindler-convex hull. As a result, the tripartite quantum entanglement of $abc$ vanishes.}\label{trisep}
\end{figure}

{In addition, if a time cutoff is introduced in a non-vacuum spacetime geometry, it can render some surfaces Rindler-convex on both sides due to the relaxation of Rindler-convexity. This makes CEMI calculable, and theoretically, though complex, one can analyze the full m-partite separability in this scenario. This could also be viewed as an approach to eliminate {gravitational} multipartite quantum entanglement by {bulk} time cutoff.}

Based on the discussions presented earlier, we can draw the following conclusions when a time cutoff is introduced in a {null vacuum} gravitational system.

1.If \textbf{no} observer can be causally connected with more than one of the subregions, the state on the subregions must be a simple product state (\ref{SSP}).

2.If there \textbf{are} n sets of observers, each being causally connected with only one of the subregions, the state on the subregions must be a full m-partite separable state (\ref{FSP}).

These observations highlight the difference of condition between eliminating mutual information and eliminating quantum entanglement {(multipartite squashed entanglement or CEMI)} via an observer approach.

\section{Conclusion and discussion}
\noindent
In this paper, we examine the {bipartite and multipartite} entanglement structures of gravitational {subregions and their dual boundary quantum states}. Through an analysis of squashed entanglement {and CEMI} between disjoint subregions, we establish that quantum states on geometrically separated subregions satisfying a certain separable condition must exhibit vanishing multipartite squashed entanglement {or CEMI} at the semiclassical order. This implies the absence of quantum entanglement between these subregions, signifying that the entire state is fully $m$-partite separable {in a null vacuum geometry}. Based on this, we {associate} various separabilities in quantum information theory, such as $m$-party total semiseparable and $k$-separable, to well-defined geometrical structures in gravitational theory, respectively.

On the dual boundary field theory side, through the utilization of subregion-subregion duality and the GRW subregion duality, we have determined that the multipartite separation theorem assesses the separability of distinct boundary subregions in the $\tilde\rho$ state (which corresponds to a GRW in the bulk). We have identified the sufficient condition for $\tilde\rho$ to exhibit full $m$-partite separability among boundary subregions $A_1, \ldots, A_m$, and this condition naturally arises in boundary physics with the introduction of a time cutoff.

Finally, we introduce a time cutoff in gravitational theory to eliminate correlations between subregions. We argue that the relaxation of the condition of geometrical separabilities corresponds to the elimination of quantum entanglement under time cutoff. This insight inspires us to consider time cutoff itself as a probe of the location of quantum entanglement. We provide an example to illustrate this concept. 

Given the striking similarity between the properties of the geometric separability and those of the separabilities of quantum states, it becomes difficult to dismiss this as mere coincidence. Moreover, the geometric structure can serve as a pedagogic graphical representation to identify entanglement structures for complex quantum states. This is reminiscent of the well-known graph in \cite{bengtsson2016brief}, which uses the topological structure of rope knots to illustrate the difference between GHZ and W states. This concept motivates us to thoroughly analyze the explicit geometrical conditions of separabilities, hoping to uncover constructive insights into entanglement structures in the realm of quantum information theory. In the following, we present some intriguing observations regarding geometrical separabilities.

\textbf{Dimension-partite relationship in the multipartite entanglement in higher dimensions.}
The observation that the dimensionality of the Cauchy slice affects the ability to separate subregions in gravitational systems is an interesting finding. It suggests that higher-dimensional Cauchy slices provide more possibilities for observers to accelerate in different directions without having causal connections with other sets of observers, making the separation of subregions easier.

In low-dimensional gravitational systems, such as those with only one spatial dimension, it is not possible to separate tripartite regions with three sets of observers. However, in two spatial dimensions, we can construct geometric structures corresponding to 2-partite separable (2-PSS), 2-tripartite separable (2-TSS), and 3-tripartite separable (3-TSS) states.

When it comes to 4-partite entanglement, the situation becomes more complex. If the spatial dimension is limited to two dimensions, it is insufficient to accommodate all distinct geometric configurations corresponding to the full range of inequivalent separable types of 4-partite entanglement. Specifically, the geometric condition for semiseparable is equivalent to 3-separable and fully separable on a 2-dimensional Cauchy slice. This does not imply that the separation theorem is inconsistent, but rather suggests that low-dimensional gravitational systems may lack some geometric structures corresponding to certain multipartite entanglement configurations.

However, it can be proven that if the spatial dimension is not less than (m-1), where m is the number of parties involved, there exist distinct geometric structures corresponding to each type of separable quantum states (n-PSS and n-TSS states). This observation establishes a dimension-partite relationship, indicating that the dimensionality of the Cauchy slice plays a crucial role in determining the available geometric structures associated with multipartite entanglement.

Further investigation of this dimension-partite relationship would be an interesting avenue for future research, as it may provide deeper insights into the interplay between geometry and multipartite entanglement in gravitational systems.

\textbf{A possible global entanglement priority rule.}
In gravitational systems, we can observe certain patterns and strategies regarding multipartite entanglement. Specifically, in the tripartite case, we have encountered systems that exhibit 2-PSS for all partitions, yet they are not semiseparable. This suggests a higher likelihood of GHZ-type states appearing compared to W-type states \cite{bengtsson2016brief}.

Furthermore, even if a state is semiseparable, it may not be fully separated. To illustrate this, let us consider a group of regions in general spacetime initially close to each other. As we gradually increase the distance between them, we observe a progression: the regions first become 2-PSS with each other, resulting in the decrease and eventual disappearance of bipartite entanglement. This leads to a semiseparable state. Continuing this process, the tripartite entanglement within each division of regions vanishes, resulting in a 3-separable state. Subsequently, the state transitions into being 4-separable, 5-separable, and so on, until it becomes fully separable. Importantly, this deformation process is continuous.

Based on these observations, it appears that entanglement in gravitational systems tends to prioritize entanglements involving more parties over those involving fewer parties. In other words, global entanglement is more prevalent and dominant in these systems. If we consider the entire state as a random mixed state, we would expect states with more ``global" entanglement to have a larger measure in the mixed state Hilbert space. Specifically, states like W class, where the multipartite entanglement originates solely from lower-partite entanglements ($n\leq m-1$), may have a measure of zero.

To summarize, our observations reveal patterns in multipartite entanglement in gravitational systems. These include the prevalence of GHZ-type states, the continuous deformation of entanglement as distances increase, and the prioritization of entanglements involving more parties. These findings provide insights into the distribution and dynamics of entanglement in gravitational systems, as well as its relation to measures of mixed state Hilbert spaces.
    
\section*{Acknowledgement}
    
We thank Teng-Zhou Lai, Yi-Yu Lin and Chang-Shui Yu for useful discussions. We would like to thank {Mark M. Wilde} for helpful email correspondence and comments. This work was supported by Project 12035016 and 12275275 supported by the National Natural Science Foundation of China.
It is also supported by Beijing Natural Science Foundation under Grant No. 1222031.

\appendix
\section{Examples of Rindler-convex surfaces}
{In order to prevent the above discussion from being overly abstract, here we provide some simple and specific examples of GRWs and Rindler convex surfaces on a Cauchy slice with zero extrinsic curvature in various background geometries as follows.}
   \begin{itemize}

       \item {\it A Rindler wedge in AdS or in Minkowski spacetime is a GRW}. {In Rindler wedge, Rindler transformation guarantees that a set of consistent accelerating observers exist in the Rindler wedge with the Rindler horizon being their horizon.} By the definition of Rindler-convexity, a Rindler wedge is a GRW.

       \item {\it In Minkowski spacetime, on a flat Cauchy slice, surface $\partial A$ is Rindler-convex if and only if $A$ is a convex set. }As we have proved, in flat spacetime, Rindler convexity reduces to the usual notion of convexity. Therefore in Minkowski spacetime, any convex spatial surface could be the bifurcation surface of the horizon of well-defined accelerating observers.

       \item {\it In de-Sitter vacuum, the static patch is the smallest GRW with the cosmological horizon being its Rindler-convex surface, which has the largest surface area}, as described in Figure \ref{dssphere}.
       \item {\it A spherical spatial region \cite{Balasubramanian:2013rqa} outside the trapped/anti-trapped surface in a spacetime with spherical symmetry is Rindler-convex}, as described in Figure \ref{dssphere}.
       \item {\it The intersection of a killing horizon with a Cauchy slice with zero extrinsic curvature is Rindler-convex.} {As there are no caustics on a killing horizon, thus normal condition of Rindler convexity is satisfied.}
       
       \item {\it In AdS vacuum or a BTZ black hole geometry, the causal wedge ($CW(A)$) of a boundary subregion $A$ is a GRW, while its entanglement wedge ($EW(A)$) is generally not a GRW (Section 4).}

   \end{itemize}

\begin{figure}[h]
    \centering 
    \includegraphics[width=0.7\textwidth]{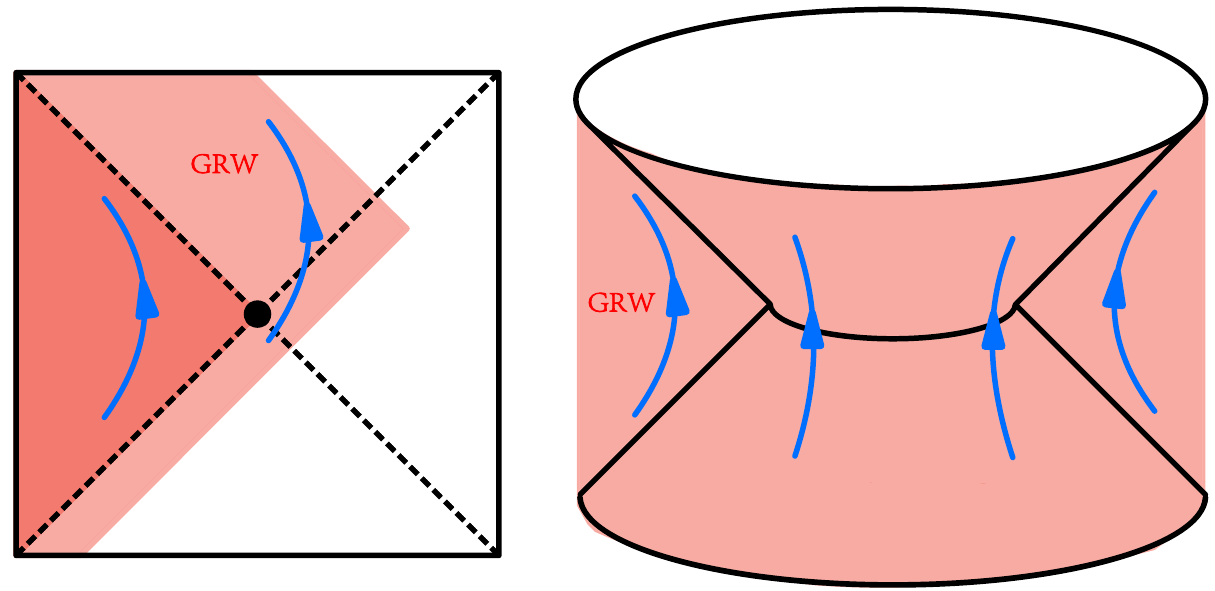} 
    \caption{Left: in dS spacetime, the static patch is the smallest GRW with the largest surface area, which is the cosmological horizon. A larger GRW is also plotted as the light red shaded region in the figure, which has a smaller surface area, though.  Right: in a spherically symmetric spacetime, the radially radiating null geodesics will never intersect if it is not inside an trapped/anti-trapped surface. Thus, a spacial sphere is always Rindler-convex under this circumstance.}
    \label{dssphere}
\end{figure}

Besides these simple examples, we could also produce Rindler convex regions through the following procedure.
\begin{theorem}
    \textit{When $T_{\mu\nu}k^\mu k^\nu=C_{\rho\mu\nu\sigma}k^\rho k^\sigma=0$, on a Cauchy slice with zero extrinsic curvature $K_{\mu\nu}=0$, Rindler-convexity is equivalent to geodesic convexity, where $T_{\mu\nu}$ is the energy-momentum tensor, $C_{\rho\mu\nu\sigma}$ is the Weyl tensor, and $k^\mu$ is any null vector. }
\end{theorem}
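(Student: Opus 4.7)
The plan is to use the tangential characterisation of Rindler-convexity from Theorem 2.2 and show that, under the stated assumptions, every lightsphere on $\Sigma$ coincides with an intrinsic geodesic sphere of $(\Sigma,h)$. Once this identification is in hand, the tangential condition of Rindler-convexity reduces to the standard tangential characterisation of geodesic convexity in a Riemannian manifold (no externally tangent geodesic sphere enters the region), and the equivalence between the two notions of convexity follows.

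First I would analyse the transverse Jacobi equation
\[
\frac{D^2 X^\mu}{d\lambda^2}=-R^{\mu}{}_{\rho\nu\sigma}\,k^\rho k^\sigma\,X^\nu
\]
along a null generator with tangent $k$. Expanding $R_{\mu\nu\rho\sigma}$ via the Weyl-Ricci decomposition, its null-null contraction splits, on the transverse screen orthogonal to $k$, into (i) the pure Weyl term $C_{\mu\rho\nu\sigma}k^\rho k^\sigma$, (ii) an isotropic piece proportional to $h_{\mu\nu}\,R_{\rho\sigma}k^\rho k^\sigma$, and (iii) residual terms containing $k_\mu$ or $k_\nu$. Part (i) is killed by the hypothesis $C_{\rho\mu\nu\sigma}k^\rho k^\sigma=0$; Einstein's equations combined with $T_{\mu\nu}k^\mu k^\nu=0$ yield $R_{\mu\nu}k^\mu k^\nu=0$, which kills (ii); on the transverse screen $k_\mu=0$, so (iii) vanishes identically. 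Thus the transverse tidal tensor is zero and transverse Jacobi fields grow linearly in the affine parameter, so the null congruence behaves exactly as in flat spacetime.

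Next I would exploit $K^{\Sigma}_{\mu\nu}=0$: this makes $\Sigma$ totally geodesic in spacetime, so every spacetime geodesic tangent to $\Sigma$ remains in $\Sigma$, and the intrinsic geodesics of $(\Sigma,h)$ coincide with such spacetime geodesics. For $p\in J^{+}(\Sigma)$ let $q\in\Sigma$ be the foot of the unique timelike geodesic from $p$ normal to $\Sigma$, and parameterise past-directed null directions at $p$ by $k=-n+s$ with $s\in S^{d-2}$ a unit spatial direction; define $\pi_{p}(s)\in\Sigma$ as the point where the corresponding null geodesic meets $\Sigma$. Because transverse Jacobi fields grow linearly and $\Sigma$ is totally geodesic, $\pi_{p}$ reproduces the flat-space prescription $s\mapsto\exp^{\Sigma}_{q}(t_{0}\,s)$, where $t_{0}$ is the proper time from $p$ to $q$. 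Hence the image $L_{p}=\pi_{p}(S^{d-2})$ is the intrinsic geodesic sphere of $(\Sigma,h)$ centred at $q$ of radius $t_{0}$, and letting $p$ vary produces every intrinsic geodesic sphere of $\Sigma$.

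Combining the two steps, lightspheres on $\Sigma$ coincide with intrinsic geodesic spheres, so the tangential condition of Theorem 2.2 reduces to the standard tangential characterisation of geodesic convexity, and the two notions are equivalent. The main obstacle will be the curvature bookkeeping in the first step: verifying in arbitrary dimension that the Weyl-Ricci decomposition of $R_{\mu\nu\rho\sigma}k^\nu k^\sigma$ is fully annihilated on the transverse screen by the two null-contraction hypotheses, keeping careful track of the metric traces. A secondary subtlety is showing that $\pi_{p}$ is a diffeomorphism onto $L_{p}$ (no caustics form on the null segment from $p$ to $\Sigma$); this follows from the linear growth of transverse Jacobi fields together with the single-point initial data at $p$, but should be recorded explicitly so that the identification with a single smooth geodesic sphere is valid.
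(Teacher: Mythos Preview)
The paper does not actually prove this theorem here; it cites Appendix A of the companion paper \cite{Ju:2023bjl}. From the surrounding discussion (Section 4.1 and the remark in the present Appendix that one need only test against \emph{infinitely large} lightspheres), the intended argument runs through the limiting case: under the null-vacuum hypotheses a null congruence from infinity has vanishing expansion and shear, and with $K^{\Sigma}_{\mu\nu}=0$ its intersection with $\Sigma$ is totally geodesic, so the tangential condition of Theorem 2.2 becomes the supporting-hyperplane characterisation of geodesic convexity.

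Your strategy---identifying \emph{every} lightsphere with an intrinsic geodesic sphere of $(\Sigma,h)$---is different and would also work, but the second step has a real gap. Vanishing of the transverse tidal tensor controls the \emph{spreading} of the null congruence from $p$ (zero shear, flat-space expansion); it does not by itself determine where each null ray lands on $\Sigma$, so the identity $\pi_p(s)=\exp^{\Sigma}_q(t_0\,s)$ does not follow from linear Jacobi growth alone. Concretely, in AdS with Poincar\'e slicing the timelike geodesic from $p$ normal to $\Sigma$ does not stay at fixed spatial coordinate, so neither the foot $q$ nor the radius $t_0$ are what your flat-space formula predicts---yet the lightsphere \emph{is} a hyperbolic geodesic circle, with a different centre. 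One clean repair is to observe first that the hypotheses force constant curvature: $C_{\mu\rho\nu\sigma}k^{\rho}k^{\sigma}=0$ for all null $k$ gives $C_{\mu(\rho|\nu|\sigma)}\propto g_{\rho\sigma}$, hence $C=0$ by trace-freeness and the first Bianchi identity, while $R_{\mu\nu}k^{\mu}k^{\nu}=0$ gives $R_{\mu\nu}\propto g_{\mu\nu}$; then $\Sigma$ is a space form and the lightsphere is a geodesic sphere by isotropy. Alternatively, keep your Jacobi analysis but use it only to compute the second fundamental form of $L_p$ inside $\Sigma$: with $K^{\Sigma}=0$ this equals the null second fundamental form where the congruence crosses $\Sigma$, which your step one shows is pure trace, so $L_p$ is totally umbilic---and in a space form that forces it to be a geodesic sphere.
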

    \begin{proof}
        The Proof can be found in appendix A in \cite{Ju:2023bjl}. 
    \end{proof}
\begin{theorem} \label{CAPtheorem}
    \textit{If $A$ and $B$ are Rindler convex regions on a Cauchy slice, and $A\cap B$ has a smooth boundary, then $A\cap B$ is also a Rindler convex region.}
\end{theorem}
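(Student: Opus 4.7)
The plan is to argue by contradiction using the normal-condition formulation of Rindler convexity (Theorem 2.2) together with the tangential-lightsphere Lemma 2.1, exploiting the containment $C := A \cap B \subseteq A$, $C \subseteq B$: a lightsphere forbidden from entering $A$ is automatically forbidden from entering $C$, and similarly for $B$. Suppose $C$ is not Rindler convex, so two outgoing normal null geodesics $\gamma_1, \gamma_2$ of $\partial C$ starting from points $q_1, q_2 \in \partial C$ meet at a caustic $r$. Intersecting the lightcone at $r$ with the Cauchy slice produces a lightsphere $L$ that, by Lemma 2.1, is tangent to $\partial C$ at both $q_1$ and $q_2$; the outgoing orientation of $\gamma_1, \gamma_2$ places the tangency on the external side of $\partial C$.

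Because $\partial C \subseteq \partial A \cup \partial B$, each $q_i$ lies on at least one of $\partial A, \partial B$. When $q_i$ lies on both, the smoothness of $\partial C$ at $q_i$ forces the tangent planes of $\partial A$ and $\partial B$ at $q_i$ to coincide (otherwise $\partial C$ would inherit a corner there), so the outgoing normal null direction of $\partial C$ at $q_i$ is simultaneously a normal null direction of $\partial A$ and of $\partial B$, and $q_i$ may be assigned to either surface. I would then split into three cases: (a) each $q_i$ is assignable to $\partial A$; (b) each is assignable to $\partial B$; (c) the genuinely mixed case $q_1 \in \partial A \setminus \partial B$ and $q_2 \in \partial B \setminus \partial A$ (after relabelling). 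In cases (a) and (b), $\gamma_1$ and $\gamma_2$ are outgoing normal null geodesics of a single surface $\partial A$ (respectively $\partial B$), so $r$ is a caustic of that surface and Rindler convexity of $A$ (respectively $B$) is contradicted immediately.

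The main obstacle is the genuinely mixed case (c), which is where the intersection structure is essential. Here $L$ is externally tangent to $\partial A$ at $q_1$, so the tangential form of Rindler convexity of $A$ gives $L \cap \mathrm{int}(A) = \emptyset$. The key step to flag is the claim $q_2 \in \mathrm{int}(A)$: since $q_2 \in \partial(A \cap B) \setminus \partial A$, some neighbourhood of $q_2$ avoids $\partial A$ and hence lies either entirely in $\mathrm{int}(A)$ or entirely in $\mathrm{int}(A^c)$, and the latter alternative would put a full neighbourhood of $q_2$ outside $A \cap B$, contradicting $q_2 \in \partial(A \cap B)$. Thus $q_2 \in \mathrm{int}(A)$; but $q_2 \in L$ by construction, giving $q_2 \in L \cap \mathrm{int}(A) = \emptyset$, the desired contradiction. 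All three cases are closed, so no caustic of $\partial C$ can form and $C$ is Rindler convex. The subtle ingredient to get right is exactly the use of the smoothness hypothesis to upgrade the ``local coincidence'' of $\partial C$ with $\partial A$ or $\partial B$ into the clean case distinction that lets Lemma 2.1 be applied on a single surface at a time.
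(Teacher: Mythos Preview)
Your argument is correct, but it takes a considerably longer route than the paper. The paper works directly with the tangential criterion at a \emph{single} point: since $\partial C \subset \partial A \cup \partial B$, any lightsphere externally tangent to $\partial C$ at a point $p$ is (after noting that $\partial C$ locally agrees with $\partial A$ or $\partial B$ there, which is where smoothness enters) externally tangent to $\partial A$ or to $\partial B$; Rindler convexity of $A$ (respectively $B$) then forbids the lightsphere from entering $A$ (respectively $B$), hence from entering $C = A\cap B$. That is the whole proof---three lines, no contradiction, no pair of points, no case split.

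By contrast, you start from the normal (caustic) criterion, which is intrinsically a two-point statement, and this forces the case analysis on which of $\partial A$, $\partial B$ each of $q_1,q_2$ belongs to. Your mixed case (c) then has to fall back on the tangential criterion anyway, so you end up using both formulations. What your approach buys is a more explicit accounting of where smoothness is used (the paper glosses over the ``external tangency to $\partial C$ implies external tangency to $\partial A$ or $\partial B$'' step), and it illustrates that the normal criterion can be pushed through; what the paper's approach buys is that the tangential criterion is local at a single tangency point, which makes the intersection behave trivially and eliminates the case analysis altogether.
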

 \begin{proof}
     {Given $C=A\cap B$, we have:}
    \begin{equation}
        \partial C=(\partial C \cap \partial A )\cup(\partial C \cap \partial B )
    \end{equation}
    because of the Rindler-convexity of regions $A$ and $B$, and due to the tangential condition, the lightspheres tangential to $\partial A$ ($\partial B$) will never reach the inside of $A$ ($B$). As a result, the lightsphere tangential to $\partial C$ that is either tangential to $A$ or $B$ will never reach the inside of $A\cap B$, which makes $C$ a Rindler-convex region by the tangential condition.
 \end{proof}

    Note that any smooth Rindler-concave surface could also look like a horizon locally by a coordinate transformation, however, as we have shown explicitly, there cannot exist globally well-defined accelerating observers for concave subregions as the worldlines of the accelerating observers would intersect at a spacetime point indicating that the observer at that point cannot be uniquely defined, thus the degrees of freedom in a concave subregion cannot be separated from its complement in a consistent way. This Rindler-convexity condition is a {\it global} condition and this may also imply that the entropy is also related to the global structure of the horizon. 
    
    \subsection{Explicit geometric constructions of Rindler-convex regions in various spacetime}
    \noindent Given a specific Cauchy slice in a specific spacetime, one would like to {to obtain explicitly geometry constructions of Rindler-convex surfaces and corresponding generalized Rindler wedges \cite{Ju:2023bjl}.} In principle, we can find all the lightspheres on this Cauchy slice and then use the tangential condition to determine the Rindler-convexity condition. However, as a larger lightsphere always can contain a smaller one due to causality, the tangential condition for a larger lightsphere is always stronger than the tangential condition for a smaller lightsphere. As a result, we only have to use the infinitely large lightsphere emitting from (converging to) past (future) null infinity to test the Rindler convexity.

In the following, we construct explicit Rindler convex surfaces in three types of geometries, the $2+1$ dimensional conformally flat spacetime with the flat spacetime as a special case, the $2+1$ dimensional conformally flat spacetime with conformal boundaries with pure AdS being a special case, and the four dimensional Schwarzschild black hole spacetime.

\noindent{\bf Case I}: a conformally flat spacetime without a spacetime boundary in the conformal coordinate.
The spacetime metric is:
\begin{equation}
ds^2=\Omega^2(t,x,y)(-dt^2+dx^2+dy^2),\quad\quad (t,x,y \in \mathbb{R})
\end{equation}
where $\Omega^2(t,x,y)$ is a smooth positive function without singular values. {This includes the flat spacetime as a special case.} This spacetime could be viewed as a Weyl transformation acting on a 3D Minkowski spacetime. As a Weyl transformation preserves null geodesics \cite{Carroll:2004st}, null geodesics are \textit{straight lines} in conformal coordinates. On the $t=0$ Cauchy slice, lightspheres are circles, and the infinitely large lightspheres are straight lines. Due to the tangential condition, Rindler-convexity regions on the $t=0$ Cauchy slice are equal to the convex sets in the 2D flat Euclidean geometry of $x$ and $y$. We parametrize the Rindler convex surface, i.e. the boundary of a (strictly) convex region $A$, denoted as $\partial A$, to be $(x=x_0(\theta), y=y_0(\theta))$. The Rindler convexity condition requires that they satisfy the parametric equations as follows
\begin{equation}\label{parametric}
        \left\{
        \begin{aligned}
        x & = x_0(\theta)  \\
        y & = y_0(\theta) 
        \end{aligned}
        \right.,\quad\text{where} \quad \frac {dx_0(\theta)}{d\theta}\cos\theta +\frac {dy_0(\theta)}{d\theta}\sin\theta=0,\quad(\theta \in [0,2\pi)).
    \end{equation}
In this equation, $\theta$ denotes the azimuth of the vector normal to the convex curve. The convexity is reflected in the fact that $x_0(\theta)$ and $y_0(\theta)$ are single-valued functions.

The procedure to obtain any Rindler convex surface from this equation is the follows. We could choose an arbitrary function of $x(\theta)$ which is periodic in $theta$ and then obtain the corresponding $y(\theta)$ by solving the differential equation in (\ref{parametric}). In this way, all Rindler convex surfaces could be produced in this spacetime. The simplest example would be to choose $x(\theta)= \cos \theta$ and $y(\theta)$ could be solved to be $c+\sin\theta$, where $c$ is an arbitrary integration constant. This gives a circle in the conformally flat spacetime. We can check the shape of the surface is always convex by choosing arbitrary periodic functions of $x(\theta)$.

   Let us prove the convexity of $\partial A$ using both tangential condition and normal condition as follows. The equation of the infinitely large lightsphere tangential to $\partial A$ at $\theta_0$ is
\begin{equation}\label{support}
    y_0'(\theta_0)(x-x_0(\theta_0))-x_0'(\theta_0)(y-y_0(\theta_0))=0.
\end{equation}
Using equation (\ref{parametric}), one can prove that 
\begin{equation}
    y_0'(\theta_0)(x_0(\theta)-x_0(\theta_0))-x_0'(\theta_0)(y_0(\theta)-y_0(\theta_0))\leq0,
\end{equation}
where the equality holds if and only if $\theta=\theta_0$, which means that every point on $\partial A$ is on the same side of the infinitely large lightsphere.  In other words, line (\ref{support}) serves as the supporting hyperplane of the convex region $A$. Consequently, $\partial A$ is a Rindler-convex surface due to the tangential condition.

Testing Rindler-convexity via the normal condition is straightforward. In conformal coordinates, the null geodesics normal to $\partial A$ at $(0,x_0(\theta_0),y_0(\theta_0))$ are given by
\begin{equation}
     \left\{
        \begin{aligned}
        t & = t_0 \quad (t>0)\\
        x & = x_0(\theta) + t_0\cos\theta\\
        y & = y_0(\theta) + t_0\sin\theta 
        \end{aligned}
    \right. \quad\text{and}\quad
    \left\{
        \begin{aligned}
        t & = t_0 \quad (t<0)\\
        x & = x_0(\theta) - t_0\cos\theta\\
        y & = y_0(\theta) - t_0\sin\theta 
        \end{aligned}
    \right..
\end{equation}
These geodesics point to the future and the past, respectively. Let us choose two arbitrary null geodesics pointing to the future (without loss of generality) with $\theta=\theta_1$ and $\theta=\theta_2$ respectively. One can calculate the distance between two points $d(t_0)$ on these geodesics with the same $t_0$. This distance is a monotonically increasing function of $t_0$, which means $d(t_0)\geq d(0)>0$. In other words, the normal null geodesics never intersect, verifying the normal condition.

\noindent{\bf Case II}: conformally flat spacetimes with conformal boundaries.
Conformal boundaries can generally exist in conformally flat spacetimes. The most well-known case is AdS spacetime,  in Poincaré half-plane coordinates, the metric given by:
\begin{equation}
ds_1^2=\frac1{z^2}(-dt^2+dx^2+dz^2),\quad\quad (t,x \in \mathbb{R}; z>0)
\end{equation}
  Here, the conformal boundary is located at $z=0$. 
As we argued in case I that null geodesics are still straight lines in conformal coordinates, the light-spheres will still be circles. However, the ``infinitely large lightsphere" is affected by the presence of the conformal boundary, which in turn, affects Rindler-convexity. Specifically, the ``infinitely large lightsphere" would be generated by the light cone whose vertex is on the conformal boundary.
From the perspective of the normal condition, the existence of a spacetime boundary relaxes the Rindler-convexity condition by making the null geodesics more difficult to intersect. Note that normal condition and tangential condition are still equivalent.

As shown in Figure \ref{examples}, with a conformal boundary existing, the infinitely large lightsphere is represented by the red semicircle. Coincidentally, it forms a geodesic in AdS spacetime, implying that Rindler convexity is equivalent to geodesic convexity in global AdS spacetime (Appendix A). 
{If we make the same ansatz as in the previous case, where we replace $y$ in (\ref{parametric}) by the $z$ coordinate, we can find that the surface could be Rindler convex even if $x_0(\theta)$ and $y_0(\theta)$ are not single-valued functions in certain cases. An example is shown as the right side of Figure \ref{Poincarepatch}. In order to get the universal form of Rindler-convex surfaces, we should use another coordinate $\lambda$ to reparameterize $x_0$ and $z_0$ functions as follows.}
\begin{equation}
    \left\{
    \begin{aligned}
    x & = x_0(\lambda)  \\
    z & = z_0(\lambda) 
    \end{aligned}
    \right.,\,\,\text{where}\,\, 
    \left\{
    \begin{aligned}
    &\lambda=\theta\,\, \text{for}\,\, {\theta}\in (0,\pi)  \\
    &\tan \lambda=x_0(\theta)-z_0(\theta)\cot \theta,\,\,\text{for}\,\,\theta \in [\pi,2\pi]. 
    \end{aligned}
    \right.
\end{equation}
Then, the surface could be Rindler-convex if and only if $x_0$ and $z_0$ are single-valued functions of $\lambda$, where $\tan \lambda$ is the $x$ coordinate of the intersection point between the normal null geodesic and the conformal boundary.

Moreover, it is worth noting that the Rindler-convexity in Poincaré coordinate is not equivalent to the Rindler-convexity in global AdS coordinate. The later is equivalent to geodesic convexity but the former is not. 

\begin{figure}[H]
        \centering 
        \includegraphics[width=0.6\textwidth]{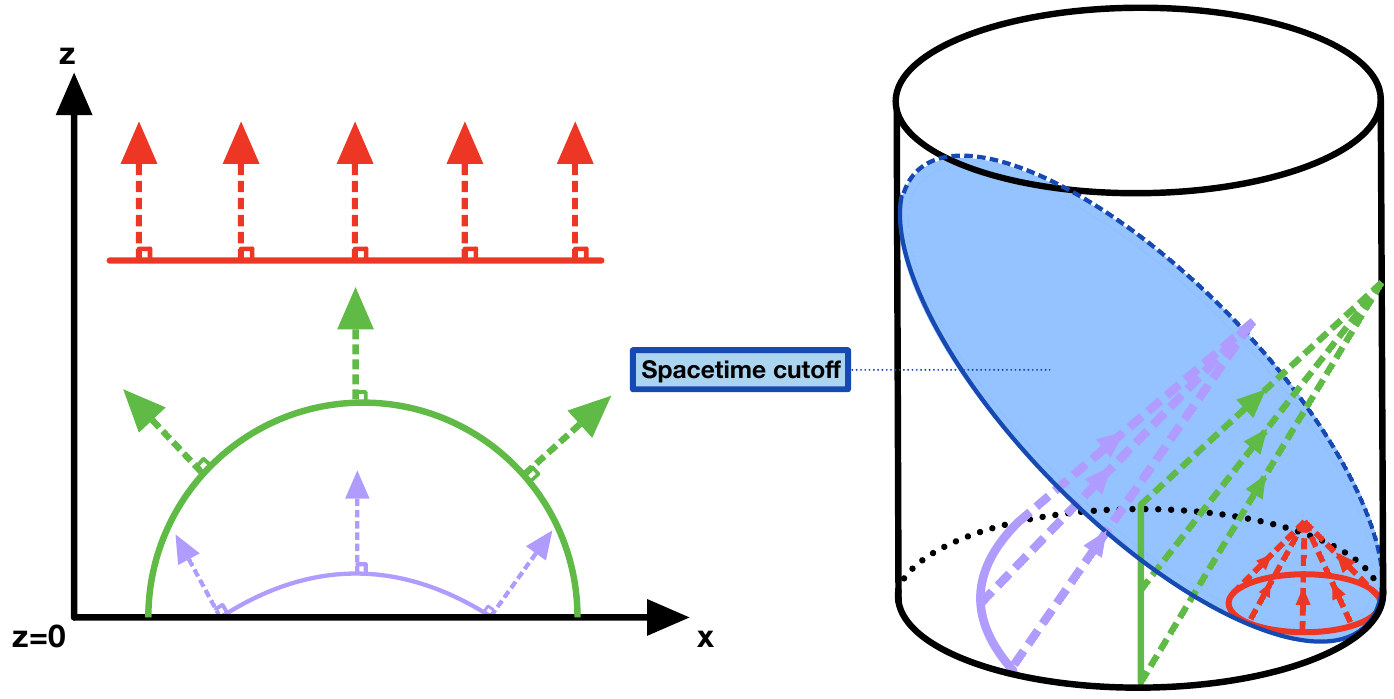} 
        \caption{Left: static Cauchy slice in Poincar\'e coordinate. Right: $t_{global}>0$ half of Poincar\'e patch from global AdS with a null cutoff (blue surface). The red and purple surfaces towards the $z>0$ side are Rindler-convex in Poincar\'e patch while Rindler-concave in global AdS.}
        \label{Poincarepatch} 
    \end{figure}

{As shown in Figure \ref{Poincarepatch}, the red surface is Rindler-convex in Poincare coordinate but Rindler-concave in global AdS.} 
This disparity arises because the Poincaré patch in global AdS spacetime has an additional spacetime boundary at $z=\infty$ and two null hypersurfaces, which relax the conditions for Rindler convexity.

\begin{figure}[H]
    \centering 
    \includegraphics[width=0.95\textwidth]{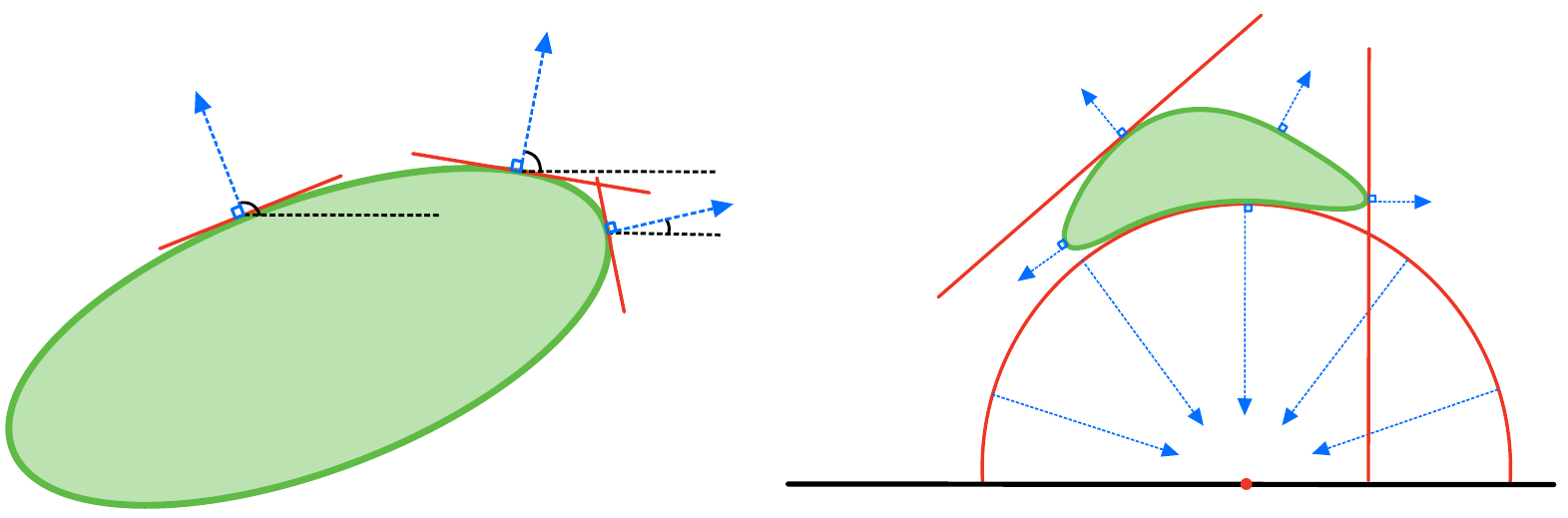} 
    \caption{Examples of Rindler-convex regions (green shaded) are shown on a flat Cauchy slice in flat spacetime (left) and in the AdS Poincaré patch (right). Blue arrows represent null geodesics, and red lines (curves) depict infinitely large lightspheres.
    } \label{examples}
\end{figure}

    \noindent{\bf Case III:}{ Rindler-convex regions in Schwarzschild black hole spacetime.}

    We construct Rindler-convex surfaces in the case of a four-dimensional Schwarzschild black hole in flat spacetime. 
    As previously mentioned, the infinitely large lightsphere is used to test the Rindler-convexity in curved spacetimes.
    To obtain the trajectories of infinitely large lightspheres, in the following we compute the null geodesics emanated from a point source far away from the Schwarzschild black hole.
    Only the light ray propagating in the equatorial plane is needed because of the rotational symmetry around the line connecting the black hole and the source.
    The metric of the equatorial plane is
    \begin{equation}
    ds^2=-f(r)dt^2+\frac{1}{f(r)}dr^2+r^2d\phi^2,  \;\; f(r)=1-2M/r
    \end{equation}
    A null geodesic $x^\mu(s)$ parameterized by affine time $s$ obeys
    \begin{equation}\label{LS0}
    g_{\mu \nu}\frac{dx^\mu}{ds}\frac{dx^\nu}{ds}=-f(r)\left(\frac{dt}{ds}\right)^2+\frac{1}{f(r)}\left(\frac{dr}{ds}\right)^2+r^2\left(\frac{d\phi}{ds}\right)^2=0.
    \end{equation}
    Two conserved quantities, energy $E$ and angular momentum $L$, are associated to the photon traveling along the geodesic with
    \begin{equation}\label{LS1}
    E=f(r)\frac{dt}{ds},  \;\;\;\; L=r^2 \frac{d\phi}{ds}.
    \end{equation}
    Thus the null geodesic equation \eqref{LS0} has the form
    \begin{equation}\label{LS2}
    -E^2+f(r)\left(\frac{dr}{ds}\right)^2+L^2 \frac{f(r)}{r^2}=0.
    \end{equation}

    Near the light source positioned at $r_0$ which is far away from the black hole ($r_0\to\infty$ and $f(r_0)\to1$), it is found that
    \begin{equation}
    -E^2+p_r^2+p_\bot^2=0,
    \end{equation}
    where $p_r=\frac{dr}{ds}$ is the radial component of momentum at infinity and $p_\bot=\frac{L}{r_0}$ can be identified as the component of momentum in the $\phi$ direction which is perpendicular to the radial direction.
    The ratio of $p_\bot$ and $p_r$ represents the direction of photon emission and is related to the impact parameter $\lambda=\frac{L}{E}$ as 
    \begin{equation}
    \frac{p_\bot}{p_r}=\frac{\lambda^2}{r_0^2-\lambda^2}.
    \end{equation}

    Therefore, $\lambda$ labels the direction of a light ray emanated from a fixed distant source. The trajectories of photons and lightspheres in a Cauchy slice of time $t$ could be computed using \eqref{LS1} and \eqref{LS2}.
    For our purpose, we use the coordinate $t$ (instead of $s$) to indicate the time a photon travels along a specific geodesic and $\lambda$ to specify which geodesic it is on.
    Combining \eqref{LS1} and \eqref{LS2}, the set of equations determining $r(t,\lambda)$ and $\phi(t,\lambda)$ is found to be 
    \begin{equation}
    \left \{ \begin{matrix}
    \frac{d\phi}{dt}=\lambda \frac{f(r)}{r^2} \\
    \left(\frac{dr}{dt}\right)^2=f^2(r)\left(1-\lambda^2\frac{f(r)}{r^2}\right)  
    \end{matrix} \right. 
    \end{equation}
    and the initial condition is $r(0,\lambda)=r_0$ for arbirary $\lambda$.
    We solve this set of equations numerically.
    The photon trajectory of impact parameter $\lambda*$ is then given by $(r(t,\lambda*),\phi(t,\lambda*)$ with $0<t<\infty$, whereas the lightsphere at time $t*$ is given by $(r(t*,\lambda),\phi(t*,\lambda))$ with $-\infty<\lambda<\infty$.
    The photon trajectories and lightspheres of a distant source are shown in Figure \ref{LS_Sch}.

    \begin{figure}[H]
        \centering 
        \includegraphics[width=0.8\textwidth]{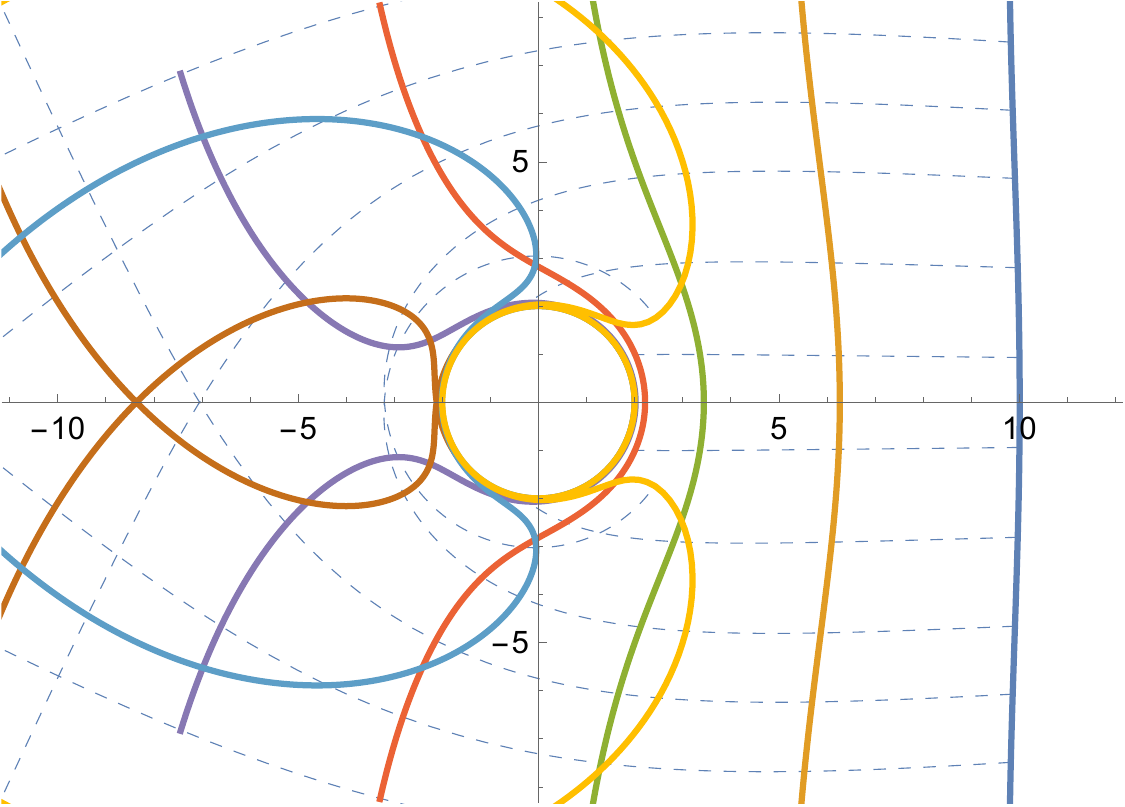} 
        \caption{Infinitely large lightspheres are depicted as colorful curves in four-dimensional Schwarzschild black hole spacetime. Regions on the right side of the lightspheres are generalized Rindler wedges. When lightspheres self-intersect, only the near-horizon regions surrounded by them are Rindler-convex.
        } \label{LS_Sch}
    \end{figure}

After we have obtained the explicit solutions of photon trajectories and infinitely large lightspheres, we have already found a set of Rindler-convex surfaces, which are these infinitely large lightspheres, as shown in Figure \ref{LS_Sch}. The GRWs are the subregions on the side of these lightspheres without the black hole horizon. Besides these Rindler convex surfaces, using theorem \ref{CAPtheorem}, we can choose the intersection of the infinite lightspheres to generate more Rindler convex surfaces and GRWs.

\section{Relationship of two versions of multipartite squashed entanglement}
\noindent
This appendix establishes the equivalence of the geometric conditions for both forms of multipartite squashed entanglement to vanish, ensuring consistency in our analysis of entanglement structures. {Given the fact that the equivalence of these measures in quantum information theory is proven \cite{Wilde_2015}, this coincidence strongly suggests a bidirectional correspondence between geometric structure and squashed entanglement, rather than a one-way derivation.}


    To demonstrate the equivalence between the geometric conditions corresponding to the two versions of multipartite squashed entanglement in gravitational systems, we provide a purely geometric proof. The first step is to prove that the intersection ($E$) of two Rindler-convex regions $AE$ and $BE$ must also be Rindler-convex.

    Due to the tangential condition of Rindler-convexity, any lightsphere that is externally tangential to a Rindler-convex region cannot penetrate its interior. In the case of the intersection ($E$) of two Rindler-convex regions $AE$ and $BE$, the surface of $E$ can be divided into two parts: one part coincides with the surface of $AE$, and the other part coincides with the surface of $BE$. Since the lightspheres that are externally tangential to these two parts cannot reach the interior of $AE$ and $BE$ respectively, they also cannot reach the interior of $E$. This implies that $E$ is Rindler-convex.

    As depicted in figure \ref{trisep}, the geometric conditions that correspond to $\widetilde{E}_{\mathrm{sq}}=0$ are that regions $ABE$, $BCE$, $ACE$, and $ABCE$ are all Rindler-convex. Therefore, the intersections of these regions, namely regions $AE$, $BE$, and $CE$, are also Rindler-convex. This satisfies the geometric condition that corresponds to $E_{\mathrm{sq}}=0$.
    
    Indeed, the equivalence between the geometric conditions corresponding to $E_{\mathrm{sq}}=0$ and $\widetilde{E}_{\mathrm{sq}}=0$ can be demonstrated. If regions $AE$, $BE$, $CE$, and $ABCE$ are all Rindler-convex, we can show that regions $ABE$, $BCE$, and $ACE$ are also Rindler-convex.

Considering region $ABE$, its surface can be divided into two parts: one part coincides with the surface of $ABCE$, and the other part coincides with the surfaces of $AE$ and $BE$. By the Rindler-convexity of these regions, any lightsphere externally tangential to these two parts cannot reach the inside of region $ABE$. Thus, region $ABE$ satisfies the definition of Rindler-convexity.
By applying the same reasoning, we can conclude that region $BCE$ and region $ACE$ are also Rindler-convex. Therefore, the geometric conditions corresponding to $E_{\mathrm{sq}}=0$ and $\widetilde{E}_{\mathrm{sq}}=0$ are equivalent.
This proof can be generalized to the n-partite case in a similar manner.

    
\bibliography{reference}
\bibliographystyle{JHEP}

\end{document}